\newcommand{\bZ}{{\mathbb{Z}}}
\newcommand{\bQ}{{\mathbb{Q}}}
\newcommand{\bC}{{\mathbb{C}}}
\newcommand{\bN}{{\mathbb{N}}}
\newcommand{\vw}{{\bf w}}
\renewcommand{\vv}{{\bf v}}
\newcommand{\lc}{\operatorname{lc}}
\newcommand{\im}{\operatorname{im}}
\newcommand{\polypart}{\operatorname{poly}}
\newcommand{\proppart}{\operatorname{proper}}
\newcommand{\supp}{\operatorname{supp}}
\newcommand{\polylog}{\operatorname{polylog}}
\newtheorem{thm}{Theorem}[section]
\newtheorem{prop}[thm]{Proposition}
\newtheorem{cor}[thm]{Corollary}
\newtheorem{lemma}[thm]{Lemma}
\newtheorem{define}[thm]{Definition}
\newtheorem{alg}[thm]{Algorithm}
\newtheorem{convention}[thm]{Convention}
\newtheorem{remark}[thm]{Remark}
\newtheorem{example}[thm]{Example}
\begin{document}

\title{Complete Reduction for Derivatives in a Primitive Tower}

\titlenote{H.\  Du was partially supported by the NSFC grant (\# 12201065). 
The research of Yiman Gao was funded in part by the Austrian Science Fund (FWF) 10.55776/PAT1332123.
W.\ Li  and Z.\  Li  were partially supported by a National Key R\&D Program of China 2023 YFA1009401, the NFSC grant (\# 12271511)
and the CAS Fund of the Youth Innovation Promotion Association (No. Y2022001).\\
 For open access purposes, the authors have applied a CC BY public copyright license to any author-accepted manuscript version arising from this submission.}
%
\author[H.\ Du]{Hao Du}
\affiliation{%
  \institution{School of Mathematical Sciences, \\ Beijing University of Posts and Telecommunications,}
  \city{Beijing (102206), China}
  \state{}
  \postcode{}
  \country{}
}
\email{haodu@bupt.edu.cn}

\author[Y.\ Gao]{Yiman Gao}
\affiliation{%
  \institution{Johannes Kepler University Linz, \\ Research Institute for Symbolic Computation (RISC),}
  \city{Altenberger Stra\ss e 69, 4040, Linz, Austria}
  \state{}
  \postcode{}
  \country{}
}
\email{ymgao@risc.jku.at}

\author[W.\ Li]{Wenqiao Li}
\affiliation{%
  \institution{Key Lab of Mathematics Mechanization, AMSS, \\ University of Chinese Academy of Sciences,}
  \city{Chinese Academy of Sciences, Beijing (100190), China}
  \state{}
  \postcode{}
  \country{}
}
\email{liwenqiao@amss.ac.cn}

\author[Z.\ Li]{Ziming Li}
\affiliation{%
  \institution{Key Lab of Mathematics Mechanization, AMSS, \\ University of Chinese Academy of Sciences,}
  \city{Chinese Academy of Sciences, Beijing (100190), China}
  \state{}
  \postcode{}
  \country{}
}
\email{zmli@mmrc.iss.ac.cn}

\renewcommand{\shortauthors}{H.\ Du, Y.\ Gao, W.\ Li and Z.\ Li.}

\begin{abstract}
A complete reduction $\phi$ for derivatives in a differential field is 
a linear operator on the field over its constant subfield.
The reduction enables us to decompose an element $f$ as the sum of
a derivative and the remainder $\phi(f)$. 
A direct application of $\phi$ is that $f$ is in-field integrable if and only if
 $\phi(f) = 0.$ 

In this paper, we present
a complete reduction for derivatives in a primitive tower algorithmically.
Typical examples for primitive towers are differential fields generated by (poly-)logarithmic functions and logarithmic integrals.
Using remainders and residues, we provide a necessary and sufficient condition for an element from a primitive tower 
to have an elementary integral, and discuss how to
 construct telescopers for non-D-finite functions in some special primitive towers.
\end{abstract}

\begin{CCSXML}
<ccs2012>
 <concept>
  <concept_id>00000000.0000000.0000000</concept_id>
  <concept_desc>Do Not Use This Code, Generate the Correct Terms for Your Paper</concept_desc>
  <concept_significance>500</concept_significance>
 </concept>
 <concept>
  <concept_id>00000000.00000000.00000000</concept_id>
  <concept_desc>Do Not Use This Code, Generate the Correct Terms for Your Paper</concept_desc>
  <concept_significance>300</concept_significance>
 </concept>
 <concept>
  <concept_id>00000000.00000000.00000000</concept_id>
  <concept_desc>Do Not Use This Code, Generate the Correct Terms for Your Paper</concept_desc>
  <concept_significance>100</concept_significance>
 </concept>
 <concept>
  <concept_id>00000000.00000000.00000000</concept_id>
  <concept_desc>Do Not Use This Code, Generate the Correct Terms for Your Paper</concept_desc>
  <concept_significance>100</concept_significance>
 </concept>
</ccs2012>
\end{CCSXML}

\ccsdesc[500]{Computing methodologies~Algebraic algorithms}
\keywords{Additive decomposition, Complete reduction, Elementary integral, Symbolic integration, Telescoper}


\maketitle

\section{Introduction}\label{SECT:intro}

Let $V$ be a linear space and $U$ a subspace of $V$.
A linear operator $\phi$ on $V$ is called a {\em complete reduction} for $U$ if $v - \phi(v) \in U$ for all $v \in V$ and $U = \ker(\phi)$ 
by \cite[Definition 5.67]{KauersBook}. Such an operator $\phi$ is an idempotent and results in
$V = U \oplus \im(\phi)$. 

Let $K$ be a differential field with derivation $^\prime$ and $C$ be the subfield of constants in $K$.
For $L \subset K$,  $L^\prime := \{ l^\prime \mid l \in L\}$. 
Then $L^\prime$ is a $C$-subspace. For  a complementary subspace $R$ for $K^\prime$, the
projection from $K$ to $R$ is a complete reduction for $K^\prime$. So there always exist complete reductions for $K^\prime$.
It remains 
\begin{enumerate}
\item to fix a complementary subspace $R$ for $K^\prime$, and
\item to develop an algorithm that, for every $f \in K$, computes $g \in K$ and $r \in R$ such that $f = g^\prime + r$.
\end{enumerate}
In general, both $K^\prime$ and $R$ are infinite-dimensional. 
\begin{example} \label{EX:rational}
Let $C$ be a field  of characteristic zero, and
$^\prime$ be the usual derivation $d/dx$ on $C(x)$.
A complementary subspace $R$ for $C(x)^\prime$ is the set of proper rational functions with squarefree denominators.
For  every $f \in C(x)$,
the Hermite-Ostrogradsky reduction on \cite[page 40]{BronsteinBook} computes $(g, r) \in C(x) \times R$ such that
$f = g^\prime  + r$.
The projection from $C(x)$ to $R$ is a complete reduction for $C(x)^\prime$.
\end{example}

Our work is motivated by reduction-based creative telescoping (see \cite[\S 5.6]{KauersBook} and \cite[\S 15]{Salvy2019})
and integration (summation) in finite terms (see \cite{BronsteinBook,Risch1969,RS2022,Karr1981,Schneider2021,Schneider2023}). 
Both need preprocessors to split an integrand (summand) as the sum
of an integrable (summable) part and a possibly non-integrable (non-summable) part.

A commonly-used preprocessor in reduction-based creative telescoping is also known as 
 an additive decomposition, which can be described in terms of linear algebra  below.
 
  Let $V$ and $U$ be the same as those in the first paragraph.
For an element $v \in V$, an additive decomposition for $U$ computes $u \in U$ and $r \in V$ such that $v = u + r$, where $r$ is minimal in some sense.
And  $v \in U$ if and only if $r=0$.
It is proposed for constructing minimal telescopers in \cite{AP2001, AP2002, AGL2002, LeThesis}, in which $V$ is the $C(x,y)$-subspace spanned by a  hypergeometric term  in $x$ and~$y$,
and $U$ is the $C$-subspace $\{ g(x, y+1)-g(x,y) \mid g \in V\}$.
Additive decompositions also appear  in \cite{CDL2018,DGLW2020}, in which $V$ is a primitive tower of some special kinds,  and $U$ consists of all derivatives in $V$.

A complete reduction is interpreted  as an additive decomposition in \cite[\S 1.2]{GaoPhD2024} as follows.
Let $\phi$ be a complete reduction  for $U$ on $V$, $G$ be a basis of $U$, and $H$ be a basis of $\im(\phi)$. Then
$G \cup H$ is a basis of $V$. For every $v \in V$,  $v = \sum_{w \in G \cup H} c_w w$ with $c_w \in C$.
Define $\supp(v)=\{w \in G \cup H \mid c_w \neq 0\}$. For $v_1, v_2\in V$, we say that $v_1$ is not higher than $v_2$ if $\supp(v_1) \subseteq \supp(v_2)$.
If $v = u + r = \tilde{u} + \tilde{r}$ for some $u, \tilde{u} \in U$, $r \in \im(\phi)$ and $\tilde{r} \in V$,
then $\supp(r) \subseteq \supp( \tilde{r})$ by an easy linear-algebra argument. Thus, $r$ is not higher than $\tilde r$.

Additive decompositions do
not always induce linear maps. So they are not necessarily complete reductions. 
Since linearity  brings a lot of convenience into both theory and practice,
it is worthwhile to seek complete reductions. So far complete reductions have been developed
for hyperexponential functions \cite{BCCLX2013}, algebraic functions \cite{CKK2016,CDK2021}, fractions of differential polynomials \cite{BLLRR2016},
fuchsian D-finite functions \cite{CHKK2018}
and D-finite functions \cite{BCPS2018,vdHJ2021,CDK2023}.


A classical topic in symbolic integration is to compute elementary integrals of transcendental Liouvillian functions (see \cite{BronsteinBook,Davenport1986,RaabThesis,RS2022,SSC1985}).
Results about this topic are usually described in monomial extensions (see \cite[\S 3.4]{BronsteinBook}). 

Let $K$ and $C$ be given  in the second paragraph, and $t$  be a monomial over $K$ (see \cite[Definition 3.4.1]{BronsteinBook}).  
The monomial extension $K(t)$ contains three $C$-subspaces highly relevant to integration. They are: 
$K(t)^\prime$ consisting of all derivatives in $K(t)$, $S_t$ consisting of proper fractions whose denominators are normal polynomials,
and $ W_t$ consisting of elements whose denominators are coprime with every normal polynomial (see \cite[Definition 3.4.2]{BronsteinBook}).  
Algorithm {\sc HermiteReduce} in \cite[\S 5.3]{BronsteinBook}
decomposes an element $f$ of $K(t)$ as the sum of a derivative, an element $s$ of $ S_t$ and an element $w$ of $W_t$.

Assume further that $t$ is either primitive or hyperexponential (see \cite[Definition 5.1.1]{BronsteinBook}) and that $K(t)$ and $K$ have the same constants.   
One tries to integrate $s$ by the residue criterion \cite[Theorem 3]{Raab2012}, and
$w$ by solving parametric Risch equations \cite[{\sc Main Theorem}]{Risch1969} and 
the parametric
logarithmic derivative problem \cite[\S 7.3]{BronsteinBook}. This approach results in  
an algorithm for deciding in-field integrability in arbitrary primitive towers (see Definition \ref{DEF:tower}). The algorithm may be turned
into an additive decomposition.
  
To develop a complete reduction, we take a different approach to handling  elements in $ W_t$. 
The approach proceeds in three steps:
\begin{enumerate}
\item[1.] Define an auxiliary subspace $A$ such that $W_t = W_t^\prime + A$.
\item[2.] Determine a basis of $W_t^\prime \cap A$.
\item[3.] Fix a complement  of $W_t^\prime$ in $W_t$ by the above basis.
\end{enumerate}
The  projection from $W_t$ to the complement  is a complete reduction for $W_t^\prime$,
which, together with Algorithm {\sc HermiteReduce}, leads to a complete reduction for derivatives in $K(t)$. 

We prefer to work out all the details for the case, in which $t$ is a primitive monomial,
although our approach is likely applicable to other cases (see \cite{GaoPhD2024,CDGHLL2025}).
This is because the approach for primitive monomials
does not  lead to any complicated case distinction, which seems unavoidable in other cases (see \cite[Section 4.1]{BCCLX2013}).

The auxiliary reduction (Algorithm \ref{ALG:ar}) developed in step 1 and construction of a basis for $ W_t^\prime \cap A$ in step 2
benefit from the way of using integration by parts to reduce polynomial integrands
in \cite{CDL2018,DGLW2020}, while the key lemma (Lemma \ref{LM:basis})  for step 2 is based on
not only integration by parts but also 
the fact that the parametric Risch
equation  in our case is of the form $y^\prime = c t^\prime + a$, where 
$a \in K$ is given, and $(y,c) \in K \times C$ is to be determined. 
If a complete reduction $\phi: K \rightarrow K$ for $K^\prime$ is available, then $t^\prime = u^\prime + \phi(t^\prime)$ and 
$a = v^\prime + \phi(a)$ for some $u,v \in K$.   
An application of $\phi$ to the above equation
yields $c \phi(t^\prime) + \phi(a) = 0$. 
Thus, $c$ is determined, and $y$ can be taken as $c u +v$ when $\phi(a) \phi(t^\prime)^{-1}$ is a constant.  
There is no need to solve any limited integration problem \cite[\S 7.2]{BronsteinBook}.
Algorithm \ref{ALG:proj} developed  in step 3 is a dual technique for representing 
a subspace by the intersection of kernels of linear functions.

In this paper, we develop a complete reduction for derivatives in primitive towers by the above approach.
The reduction leads to  an algorithm for determining in-field integrability (see Examples \ref{EX:rattower} and \ref{EX:alg}),
and can be applied to compute elementary integrals over such towers (see Example \ref{EX:elem}). 
We also construct telescopers for some non-D-finite functions by the reduction
(see Example \ref{EX:telesoper}).


The rest of this paper is organized as follows.  In Section \ref{SECT:pre},
we specify notation and present several algorithms to be used in the sequel.
Basic constructions in the above three steps are described in Section \ref{SECT:basic}.
The constructions yield an algorithm for our complete reduction, as soon as the notion of primitive towers is introduced in Section \ref{SECT:tower}.
Some applications of the complete reduction are presented in Section \ref{SECT:app}.
Concluding remarks are given in Section \ref{SECT:conc}. 

\section{Preliminaries}\label{SECT:pre}

This section has  three parts.  In Section \ref{SUBSECT:notation}, we introduce some basic notions concerning symbolic integration and
fix notation to be used.
Effective bases are defined and constructed in Section \ref{SUBSECT:basis}.
They allow us to apply a dual technique in linear algebra.
In Section \ref{SUBSECT:log}, we review an algorithm in the proof of \cite[Theorem 3.9]{RaabThesis}, which helps us 
compute elementary integrals in Section \ref{SECT:app}.

\subsection{Notation and rudimentary notions} \label{SUBSECT:notation}

Throughout the paper, $G^\times$ denotes $G \setminus \{0\}$ for an additive group $(G, +, 0)$.
For $n \in \bN$,
the sets $\{1, \ldots, n\}$ and $\{0, 1, \ldots, n\}$ are denoted by
$[n]$ and $[n]_0$, respectively. The transpose of a matrix is denoted by $( \cdot )^\tau$.
Comments in an algorithm are placed between $(^*$ $\cdots$ $^*)$.

All fields are of characteristic zero in the paper. Let $K$ be a field.  We denote its algebraic closure by $\overline{K}$.
For a univariate polynomial $p$ over $K$, its degree and leading coefficient
are denoted by $\deg(p)$ and $\lc(p)$, respectively, when the indeterminate is clear from context.
In particular, $\deg(0):=-\infty$ and $\lc(0):=0$. Similarly, a univariate rational function
is said to be {\em proper} if the degree of its numerator is less than that of its denominator.
A rational function $r$ can be uniquely written as the sum of a polynomial and a proper rational function,
which are denoted by $\polypart(r)$ and $\proppart(r)$, respectively.


A map $^\prime:$ $K \rightarrow K$ is called a {\em derivation} on $K$
if $(a+b)^\prime = a^\prime + b^\prime$ and $(ab)'=ab'+a'b$ for all $a, b \in K$.
A {\em differential field} is a field equipped with a derivation.
Let $(K, \, ^\prime)$ be a differential field.
An element $c$ of $K$ is called a {\em constant} if $c^\prime=0$.
All constants in $K$ form a subfield.
A differential field $(E, \delta)$ is called a {\em differential field extension} of $(K, \, ^\prime)$
if $K$ is a subfield of $E$ and $^\prime$ is the restriction of $\delta$  to $K$.
We still use $^\prime$ to denote $\delta$  when there is no confusion.

Assume that $t$ belongs to a differential field extension of $K$.
If $t$ is transcendental over $K$ and $t^\prime \in K[t]$,
then $t$ is called {\em a monomial} over $K$ and $K(t)$ is called a {\em monomial extension} of $K$.

Let $t$ be a monomial over $K$. A polynomial $p \in K[t]^\times$ is said to be {\em normal}
if $\gcd(p, p^\prime)=1$. 
An element $f$ of $K(t)$ is said to be {\em simple} if it is proper and has a normal denominator.
The subset consisting of all simple elements is denoted by $S_t$,
which is a $K$-subspace.
Note that $f$ is simple if it has a normal denominator in \cite[Definition 3.5.2]{BronsteinBook}. We further require that $f$ is proper
for the uniqueness of $s$ in \eqref{EQ:hr} given below. 
We call $t$ a {\em primitive monomial} over $K$ if $t'\in K \setminus K'.$
A primitive monomial extension $K(t)$ has no new constant other than the constants in $K$
by \cite[Theorem 5.1.1]{BronsteinBook}.

Let  $t$ be a primitive monomial over $K$. 
For every $f \in K(t)$, there exists $g \in K(t)$, $p \in K[t]$ and a unique $s \in S_t$
such that
\begin{equation} \label{EQ:hr}
f = g^\prime + p + s.
\end{equation}
The uniqueness of $s$ is due to \cite[Lemma 2.1]{CDL2018}.
Algorithm {\sc HermiteReduce} in \cite[\S 5.3]{BronsteinBook} computes a triplet 
 $(g, p, s) \in K(t) \times K[t] \times S_t$ such that the above equation holds.

%
%
%
\begin{example} \label{EX:hr}
Let $K=C(x)$, $t = \log(x)$ and 
$$f= \frac{(x+1)t^2+(x^2+2x+2)t+x+1}{x(t+1)} \in K(t).$$
 {\sc HermiteReduce}$(f)$ finds $(g, p,  s) \in K(t) \times K[t] \times S_t$
such that \eqref{EQ:hr} holds, where
$g = 0,$  $p = \frac{x+1}{x} t+ \frac{{x}^{2}+x+1}{x},$ and $s = -\frac{x}{t+1}.$
Unfortunately,
the algorithm does not extract any in-field integrable part from $f$.
It will be shown that $p \in K(t)^\prime$ in Example \ref{EX:cr}.
\end{example}

The next lemma presents two properties concerning decomposition and contraction in primitive monomial extensions. 
 They play an important role in the proof of our main result (Theorem \ref{TH:decomp}).
\begin{lemma} \label{LM:reduce}
If $t$ is a primitive monomial over $K$, then 
\begin{itemize}
\item[(i)] $K(t)=(K(t)'+K[t])\oplus S_t$, and
\item[(ii)] $K(t)^\prime \cap K[t]=K[t]'.$
\end{itemize}
\end{lemma}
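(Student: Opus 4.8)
The plan is to derive (i) from the existence-and-uniqueness statement \eqref{EQ:hr} and to reduce (ii) to a pole-counting argument that exploits the special structure of derivations in primitive monomial extensions.

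For (i), I would first note that the existence part of \eqref{EQ:hr} already gives $K(t) = K(t)' + K[t] + S_t$, so only the directness of the sum with $S_t$ remains, i.e.\ $\left(K(t)' + K[t]\right) \cap S_t = \{0\}$. Take $s \in S_t$ lying in $K(t)' + K[t]$, say $s = g' + p$ with $g \in K(t)$ and $p \in K[t]$. Then $0 = g' + p + (-s)$ exhibits $0$ in the form \eqref{EQ:hr} with simple part $-s \in S_t$, while $0 = 0' + 0 + 0$ does the same with simple part $0$. The uniqueness of the simple part in \eqref{EQ:hr} forces $-s = 0$, hence $s = 0$, which is precisely directness.

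For (ii), the inclusion $K[t]' \subseteq K(t)' \cap K[t]$ is immediate once I observe that $K[t]$ is closed under $^\prime$: since $t$ is primitive, $t' \in K$, so differentiating $\sum_i a_i t^i$ keeps every summand in $K[t]$. For the reverse inclusion, take $p \in K[t] \cap K(t)'$ and write $p = h'$ with $h \in K(t)$. Splitting $h = \polypart(h) + \proppart(h)$ and using $\polypart(h)' \in K[t]$, I obtain $\proppart(h)' = p - \polypart(h)' \in K[t]$. Thus everything reduces to showing that a proper rational function $R = \proppart(h)$ whose derivative is a polynomial must vanish; granting this, $p = \polypart(h)' \in K[t]'$.

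The main obstacle, and the real content of (ii), is this last claim, which I would settle by pole counting. The key structural fact in the primitive case is that there are no special irreducible polynomials of positive degree: if $B \in K[t]$ is monic irreducible of degree $\ge 1$, then, because $\lc(B)$ is a constant and $t' \in K$, one has $\deg(B') < \deg(B)$, whence $B \nmid B'$ and $\gcd(B, B') = 1$; consequently every irreducible factor of $\den(R)$ is normal. Now suppose $R \neq 0$. Being proper, its denominator has an irreducible factor $B$ with $B^m \,\|\, \den(R)$ for some $m \ge 1$. Writing $R = N/D$ with $D = B^m E$, $\gcd(B, E) = 1$ and $\gcd(N, D) = 1$, a short computation of $R' = (N'D - ND')/D^2$ shows that, after cancelling a common factor $B^{m-1}$, the numerator of $R'$ is $\equiv -m N B' E \pmod{B}$, which is nonzero because $m \neq 0$ in characteristic zero and $B$ is coprime to each of $N$, $B'$, and $E$. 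Hence $B^{m+1} \,\|\, \den(R')$, so $R'$ has a genuine pole and cannot lie in $K[t]$, a contradiction. Therefore $R = 0$, completing (ii). The only delicate point to verify carefully is that no cancellation lowers the order of the pole, which is exactly what the congruence modulo $B$ rules out.
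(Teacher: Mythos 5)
Your proof of (i) coincides with the paper's: existence in \eqref{EQ:hr} gives $K(t)=K(t)'+K[t]+S_t$, and uniqueness of the simple part, applied to the two decompositions $0=g'+p+(-s)$ and $0=0'+0+0$, gives $(K(t)'+K[t])\cap S_t=\{0\}$. For (ii) you take a genuinely different route. Both arguments reduce, via the splitting $h=\polypart(h)+\proppart(h)$, to showing that a proper $R\in K(t)$ with $R'\in K[t]$ must vanish; but the paper settles this with a single degree estimate --- since $t'\in K$, differentiation does not raise $t$-degrees, so the derivative of a proper fraction is again proper, and an element that is simultaneously proper and a polynomial is $0$ --- whereas you prove the sharper local fact that differentiation strictly increases the pole order at every irreducible factor of $\den(R)$. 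Your route is heavier (it needs normality of those factors) but standard and more informative; the paper's degree argument is shorter and, notably, uses only $t'\in K$, while yours genuinely requires $t'\in K\setminus K'$.

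That last point is also where your proof has its one soft spot. From $\deg(B')<\deg(B)$ you conclude that $B$ does not divide $B'$, but this silently assumes $B'\neq 0$: every polynomial divides $0$, and if $B'=0$ then $\gcd(B,B')=B$ and your key residue $-mNB'E$ vanishes modulo $B$, so the pole count collapses. Moreover, $B'\neq 0$ does not follow from the reasons you cite ($\lc(B)$ constant and $t'\in K$): if $t'$ were equal to $u'$ for some $u\in K$, then $B=t-u$ would be monic, irreducible, of degree one, and with $B'=0$. So this is precisely the step where primitivity must enter. The fix is one line: writing $b_{n-1}$ for the coefficient of $t^{n-1}$ in a monic $B$ of degree $n\geq 1$, the coefficient of $t^{n-1}$ in $B'$ is $nt'+b_{n-1}'$, so $B'=0$ would force $t'=(-b_{n-1}/n)'\in K'$, contradicting $t'\in K\setminus K'$; equivalently, $B'=0$ would make $B$ a new constant of $K(t)$, contradicting the no-new-constants property of primitive extensions quoted in Section \ref{SUBSECT:notation}. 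With that line added, your proof is correct.
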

\begin{proof}
(i) holds by \eqref{EQ:hr}, and (ii)  holds because  the derivative of a proper element of $K(t)$ remains proper.
\end{proof}

\subsection{Effective bases} \label{SUBSECT:basis}

This section is a preparation for a dual technique to be used in  Sections \ref{SECT:basic} and \ref{SECT:tower}.
\begin{define} \label{DEF:effective}
Let $E$ be a field with a subfield $F$, $\Theta$ be an $F$-linear basis  of $E$, $\theta \in \Theta$ and $a \in E$. Then
\begin{itemize}
\item[(i)] $\theta^*$ stands for the $F$-linear function  on $E$ that maps $\theta$ to $1$ and any other element of $\Theta$ to $0$. 
\item[(ii)] $\theta$ is said to be {\em effective} for $a$ if $\theta^*(a) \neq 0$.
\item[(iii)] $\Theta$ is called an {\em effective $F$-basis} if there are two algorithms :
\begin{itemize}
\item one finds $\theta \in \Theta$ effective for $a$ if $a \neq 0$; and 
\item the other computes $\theta^*(a)$.
\end{itemize}
\end{itemize}
\end{define}

Let $F$ be a field and $F(y)$ the field of rational functions in $y$.
Set $Y = \left\{ y^i \mid i \in \bN \right\}$ and
$Q$ to be the set consisting of monic and irreducible polynomials with positive degrees.
Then
\begin{equation} \label{EQ:basis0}
  \Theta =  Y \cup \left\{ \frac{y^i}{q^j} \mid q \in Q, 0 \le i < \deg(q), j \in \bZ^+ \right\}
\end{equation}
is an effective $F$-basis of $F(y)$ by the irreducible partial fraction decomposition.
The two algorithms required in Definition \ref{DEF:effective} (iii) are given below.
Their correctness is evident.
\begin{alg}{\sc BasisElement} \label{ALG:be}

 \noindent 
{\sc Input}: $a \in F(y)^\times$ \quad {\sc Output}: $(\theta, c) \in  \Theta \times C^\times$ with $c = \theta^*(a)$
\begin{itemize}
	\item[1.] $p \leftarrow \polypart(a),$ $r \leftarrow \proppart(a)$, $d \leftarrow$ the denominator of $r$
	\item[2.] {\sc if} $p \neq 0$ {\sc then} {\sc return} $\left(y^{\deg(p)}, \, \lc(p) \right)$ {\sc end if} 
    \item[3.]  $q \leftarrow$ a factor of $d$ in $Q$, $m \leftarrow$ the multiplicity of $q$ in $d$
    \item[4.]  $h \leftarrow$ the coefficient of $q^{-m}$ in the $q$-adic expansion of $r$
    \item[5.]  {\sc return} $\left( y^{\deg(h)}/q^m,  \, \lc(h) \right)$
\end{itemize}
\end{alg}
\begin{remark} \label{RE:factor}
There is no obvious rule for choosing an irreducible factor $q$ of $d$ in step 3 of Algorithm \ref{ALG:be}. 
For example, let $f = \frac{1}{y(y+1)}$. One may set $q$ to be either $y$ or $y+1$. Then $\theta$ obtained in step 5
may be either $\frac{1}{y}$ or $\frac{1}{y+1}$. So the algorithm does not guarantee that
the same output will be returned when it is applied  to the same input twice. 
\end{remark}
In practice, we choose $q$ to be the first member in the list of irreducible factors of $d$ computed by a  factorization algorithm.

\begin{alg} \label{ALG:cf} {\sc Coefficient}

 \noindent  
{\sc Input}: $(b, \theta) \in F(y) \times \Theta$ \quad {\sc Output}: $\theta^*(b)$
		
		\begin{itemize}
			\item[1.] $p \leftarrow \polypart(b),$ $r \leftarrow \proppart(b)$
            \item[2.] Write $\theta = y^k/q^m$ for some $k, m \in \bN$, $q \in Q$, $\gcd(y, q)=1$
			\item[3.] {\sc if} $m = 0$ {\sc then}
			{\sc return} the coefficient of $y^k$ in $p$
			{\sc end if}
			\item[4.] $h \leftarrow$ the coefficient of $q^{-m}$ in the $q$-adic expansion of $r$
			\item[5.] {\sc return} the  coefficient of $y^k$ in $h$
		\end{itemize}
\end{alg}
 
\begin{remark} \label{RE:basis}
Let $F$ and $E$ be given in Definition \ref{DEF:effective} and $C$ a subfield of $F$. 
Assume that $F$ has an effective $C$-basis $\Theta_0$ and that $E$ has an effective $F$-basis $\Theta$.
Then $\{\theta_0 \theta \mid \theta_0 \in \Theta_0, \theta \in \Theta\}$ is an effective $C$-basis of $E$
by a straightforward recursive argument.
\end{remark}

\subsection{Constant residues} \label{SUBSECT:log}

Let $(K, \, ^\prime)$ be a differential field with constant subfield $C$, and $t$ be a monomial over $K$.
For $f \in S_t$ and $\alpha \in \overline{K}$, an element 
$\beta \in \overline{K}$ is the {\em residue} of $f$ at $\alpha$
if and only if
$f=g+ \beta \frac{(t-\alpha)^\prime}{t-\alpha}$
for some $g \in \overline{K}(t)$ whose denominator is coprime with $t-\alpha$.
The residue of $f$ at $\alpha$ is nonzero if and only if $\alpha$ is a root of its denominator.

Below is a minor variant
of an algorithm described in the proof of \cite[Theorem 3.9]{RaabThesis}.
In its pseudo-code, $D_t$ stands for the derivation on $K(t)$ that maps every element of $K$ to $0$
and $t$ to $1$, and $\kappa$ for the coefficient-lifting derivation from $(K, \, ^\prime)$ to $K(t)$ (see \cite[\S 3.2]{BronsteinBook}).
\begin{alg}{\sc ConstantMatrix} \label{ALG:cs}

 \noindent 
{\sc Input:}  $f,g_1,\cdots,g_l \in S_t$

 \noindent
{\sc Output:} $M \in C^{k \times l} $  and $\vv \in C^k$ such that all residues of $f - \sum_{i=1}^{l}c_i g_i $
belong to $\overline{C}$  if and only if 
$M \begin{pmatrix}
     c_1,
     \cdots,
     c_l
     \end{pmatrix}^\tau
     = \vv$ 

\begin{enumerate}
\item[1.] $h \leftarrow f - c_1 g_1 - \cdots - c_l g_l$, 

where $c_1, \ldots, c_l$ are constant indeterminates
\item[2.] $p \leftarrow$ the numerator of $h$, $q \leftarrow$ the denominator of $h$
\item[3.] $(u,v) \leftarrow$ the respective inverses of $(q^\prime, D_t(q))$ mod $q$ 

          $w \leftarrow \kappa( pu) - D_t(p u) \cdot v \cdot \kappa(q)$ 
          
          $r \leftarrow$ remainder of $w$ on division by $q$
\item[4.] $(M, \vv) \leftarrow$ an augmented matrix of the linear system 

in $c_1, \ldots, c_l$ obtained by
setting $r=0$
\item[5.] {\sc return} $M, \vv$
\end{enumerate}
\end{alg}

To see its correctness, we note that $q$ obtained from step 2 is normal and free of $c_1, \ldots c_l$. Then $\gcd(q^\prime, q)=\gcd(D_t(q), q)=1$.
Therefore, both $u$ and $v$ can be computed in step 3.
Let $\alpha$ be a root of~$q$. Then $\alpha^\prime = - v(\alpha) \cdot \kappa(q)(\alpha)$ by \cite[Theorem 3.2.3]{BronsteinBook}.
On the other hand, the residue $\beta$ of $h$ at $\alpha$ is equal to $(pu)(\alpha)$ by \cite[Lemma 4.4.2]{BronsteinBook}.
 Then $\beta^\prime = w(\alpha)$, where $w$ is also
computed in step~3.
 Therefore, $r=0$ if and only if all  residues of $h$ belong to $\overline{C}$.
The system obtained in step 4 is linear
because $c_1, \ldots, c_l$ appear linearly in the coefficients of $r$.

\section{Basic constructions}\label{SECT:basic}

In this section, we let $(K, \, ^\prime)$ be a differential field and $C$ be the subfield of its constants.
Assume that there exists a complete reduction $\phi$ on $K$ for $K^\prime$, and
an algorithm that, for every $f \in K$, computes $g \in K$ such that $f = g^\prime + \phi(f)$. 
We call $\phi(f)$ the {\em remainder} of $f$ and $(g, \phi(f))$ a {\em reduction pair} of $f$ (with respect to~$\phi$). 
A reduction pair will be abbreviated as an R-pair in the sequel. 
 
Let $t$ be a primitive monomial over $K$.
We are going to define a complete reduction $\psi$ on $K(t)$
for $K(t)'$.
It suffices to construct a complementary subspace of $K[t]^\prime$ in $K[t]$ by Lemma \ref{LM:reduce}.

As a matter of notation, 
the $C$-subspace
$\bigoplus_{i \in \bN} V \cdot t^i$ for some $C$-subspace $V$ of $K$  is denoted by $V \otimes C[t]$
in virtue of the $C$-isomorphism $v \otimes t^i \mapsto v t^i$ from $V \otimes_C C[t]$ to $\bigoplus_{i \in \bN} V \cdot t^i$.

First, we decompose $K[t]$ as the sum of $K[t]^\prime$ and $\im(\phi) \otimes C[t]$. 
\begin{lemma} \label{LM:aux}
Let $p \in K[t]$ with $\deg(p)=d$. There exists $q \in K[t]$ with $\deg(q) \le d$ and $r \in \im(\phi) \otimes C[t]$ with $\deg(r) \le d$
such that
\begin{equation} \label{EQ:ar}
p = q^\prime + r.
\end{equation}
\end{lemma}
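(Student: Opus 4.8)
The plan is to prove this by induction on the degree $d = \deg(p)$, peeling off one monomial at a time from the top. The key computational device is integration by parts applied to the leading term. Suppose $p = a\,t^d + (\text{lower order})$ with $a \in K$. Since $t$ is a primitive monomial, $t^\prime \in K$, and I want to rewrite $a\,t^d$ modulo a derivative. The natural guess is to compare $a\,t^d$ with the derivative of something like $b\,t^{d+1}$ or $b\,t^d$ for a suitable $b \in K$. Concretely, $(b\,t^d)^\prime = b^\prime t^d + d\,b\,t^{\prime}\,t^{d-1}$, which produces a $t^d$-term with coefficient $b^\prime$; so to cancel the leading coefficient $a$ I would solve $b^\prime = a - c\,t^\prime$ up to a constant using the hypothesized complete reduction $\phi$ on $K$. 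Writing $a = u^\prime + \phi(a)$ via the assumed R-pair algorithm, I can set $b = u$ so that $(u\,t^d)^\prime = a\,t^d - \phi(a)\,t^d + d\,u\,t^\prime\,t^{d-1}$, leaving a new leading coefficient $\phi(a) \in \im(\phi)$ and strictly lower-order contributions.

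The key steps, in order, are as follows. First I would set up the induction, with the base case $d \le 0$ (i.e.\ $p \in K$) handled directly by the assumed R-pair: $p = u^\prime + \phi(p)$ with $\phi(p) \in \im(\phi) = \im(\phi) \otimes C[t]$ restricted to degree zero, giving $q = u \in K$ and $r = \phi(p)$, both of degree $\le 0$. Second, for the inductive step with $\deg(p) = d \ge 1$ and leading coefficient $a$, I would compute an R-pair $(u, \phi(a))$ of $a$ and form $p - (u\,t^d)^\prime$. This difference has $t^d$-coefficient equal to $\phi(a) \in \im(\phi)$, and its remaining part---namely $p - (u\,t^d)^\prime - \phi(a)\,t^d$---is a polynomial in $K[t]$ of degree strictly less than $d$, because the $d\,u\,t^\prime\,t^{d-1}$ term and all original lower-order terms sit in degree $\le d-1$. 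Third, I would apply the induction hypothesis to this lower-degree polynomial to obtain $q_0 \in K[t]$ of degree $\le d-1$ and $r_0 \in \im(\phi) \otimes C[t]$ of degree $\le d-1$ with the difference equal to $q_0^\prime + r_0$. Finally I would assemble $q := u\,t^d + q_0$ and $r := \phi(a)\,t^d + r_0$, verifying $\deg(q) \le d$ and $\deg(r) \le d$ and that $p = q^\prime + r$ with $r \in \im(\phi) \otimes C[t]$.

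The step I expect to be the main obstacle---or at least the one demanding the most care---is bookkeeping the degree bounds through the induction, particularly ensuring that the correction term $d\,u\,t^\prime\,t^{d-1}$ genuinely lives in degree $\le d-1$. This hinges on $t$ being a \emph{primitive} monomial, so that $t^\prime \in K$ rather than in $K[t]$ of positive degree; if $t^\prime$ had positive $t$-degree, differentiating $u\,t^d$ could raise rather than lower the order and the induction would collapse. Thus the primitivity hypothesis is exactly what keeps the degree non-increasing and the recursion well-founded. A secondary point worth checking is that $\im(\phi) \otimes C[t]$ is closed under the operations used: since each peeled-off leading coefficient is $\phi(a) \in \im(\phi)$ and these are attached to powers $t^i$ with scalar ($C$) combinations accumulating, the remainder $r$ indeed lies in $\bigoplus_i \im(\phi)\cdot t^i = \im(\phi) \otimes C[t]$. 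No limited-integration or parametric Risch machinery is needed here; everything reduces to the single assumed complete reduction $\phi$ on $K$ together with integration by parts, which is precisely the economy the introduction advertises.
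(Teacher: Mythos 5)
Your proof is correct and follows essentially the same route as the paper's: compute an R-pair of the leading coefficient, use integration by parts on $u\,t^d$ to replace it by a derivative plus $\phi(a)t^d$ plus terms of degree at most $d-1$ (which relies precisely on $t'\in K$), and induct on the degree. The only cosmetic difference is that the paper's base case is $p=0$ rather than $p\in K$, which changes nothing.
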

\begin{proof} If $p=0$, then set $q=r=0$. Assume that $p$ is nonzero with degree $d$ and leading coefficient $l$.

Let $\left(g, \phi(l) \right)$ be an R-pair of $l$, and $h=p-lt^d$. With integration by parts,  we have 
\begin{equation} \label{EQ:normal}
	p  = g't^d + \phi(l)t^d +h
	  = \left(gt^d \right)' + \phi(l)t^d +  h- (d gt')t^{d-1}.
\end{equation}
Since $\phi(l) t^d \in \im(\phi) \otimes C[t]$ and $d> \deg\left(h - (d gt')t^{d-1}\right)$, 
the lemma follows from an induction on $d$.
\end{proof}
\begin{define} \label{DEF:aux}
The $C$-subspace $\im(\phi) \otimes C[t]$, denoted by $A$, is called the {\em auxiliary subspace} for $K[t]'$ in $K[t]$.
\end{define}
\begin{cor}\label{COR:aux}
$K[t] = K[t]^\prime + A.$
\end{cor}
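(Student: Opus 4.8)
The plan is to derive Corollary~\ref{COR:aux} directly from Lemma~\ref{LM:aux}, which does all the heavy lifting. The statement $K[t] = K[t]^\prime + A$ is a set-theoretic equality of $C$-subspaces, so I would prove the two inclusions separately, observing that the nontrivial direction is already packaged into the lemma.

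First I would dispatch the easy inclusion $K[t]^\prime + A \subseteq K[t]$. Every derivative $q^\prime$ of an element $q \in K[t]$ lies in $K[t]$, since $t$ is a monomial over $K$ and hence $t^\prime \in K[t]$, so differentiation maps $K[t]$ into itself. Likewise $A = \im(\phi) \otimes C[t] \subseteq K[t]$, because $\im(\phi) \subseteq K$ and the tensor notation just denotes $\bigoplus_{i \in \bN} \im(\phi) \cdot t^i$, a subspace of $\bigoplus_{i \in \bN} K \cdot t^i = K[t]$. Adding two subspaces of $K[t]$ stays inside $K[t]$.

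For the reverse inclusion $K[t] \subseteq K[t]^\prime + A$, I would simply invoke Lemma~\ref{LM:aux}. Given any $p \in K[t]$, the lemma supplies $q \in K[t]$ and $r \in \im(\phi) \otimes C[t] = A$ with $p = q^\prime + r$. Thus $p \in K[t]^\prime + A$, and since $p$ was arbitrary the inclusion holds. Combining the two inclusions yields the claimed equality.

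I do not expect any genuine obstacle here, since the corollary is essentially a restatement of the lemma with the degree bounds forgotten and the construction abstracted into the subspace $A$ from Definition~\ref{DEF:aux}. The only point worth a word of care is confirming that differentiation preserves $K[t]$ (so that the easy inclusion is valid) and that $A$ is genuinely a subset of $K[t]$; both are immediate from $t^\prime \in K[t]$ and $\im(\phi) \subseteq K$. The proof will therefore be a one- or two-line deduction citing Lemma~\ref{LM:aux}.
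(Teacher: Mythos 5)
Your proof is correct and follows the same route as the paper, which simply notes the corollary is immediate from Lemma~\ref{LM:aux}; your additional verification that $K[t]^\prime + A \subseteq K[t]$ is a routine spelling-out of that same argument.
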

\begin{proof} It is immediate from Lemma \ref{LM:aux}. \end{proof}
The next algorithm is direct from the proof of Lemma \ref{LM:aux}.

\begin{alg}
{\sc AuxiliaryReduction} \label{ALG:ar}

\noindent 
{\sc Input:}   $p \in K[t]$

\noindent
{\sc Output:}  $(q, r) \in K[t] \times A$ such that \eqref{EQ:ar} holds
\begin{enumerate}
\item[1.] $\tilde{p} \leftarrow p$, $q \leftarrow 0$, $r \leftarrow 0$
\item[2.] {\sc while} $\tilde{p} \neq 0$ {\sc do}
\begin{itemize}
\item[]  $d \leftarrow \deg(\tilde{p})$, $l \leftarrow \lc(\tilde{p})$, compute an R-pair $(g, \phi(l))$ of $l$

$q \leftarrow q + g t^d, \,\, r \leftarrow r + \phi(l) t^d, \,\, \tilde{p} \leftarrow \tilde{p} - l t^d - (d g t^\prime) t^{d-1}
$
\end{itemize}
{\sc end do}
\item[3.] {\sc return} $(q, r)$
\end{enumerate}
\end{alg}

Next, let us construct a $C$-basis of $K[t]' \cap A$. To this end,
we fix an R-pair $(\lambda_t, \phi(t^\prime))$ of $t^\prime$ and call
it {\em the first pair associated to $K(t)$}. 
\begin{remark} \label{RE:lc}
The remainder $\phi(t^\prime) \in K[t]^\prime$, because it is equal to $(t - \lambda_t)^\prime$.
Moreover, $\phi(t^\prime) \neq 0$ because $t$ is a primitive monomial.  
\end{remark}

For all $i \in \bZ^+,$  we calculate
\begin{equation} \label{EQ:basis1}
    \phi(t')t^i = t't^i - \lambda_t^\prime t^i 
   = \left(\dfrac{t^{i+1}}{i+1} - \lambda_tt^i \right)' + (i \lambda_t t') t^{i-1}.
\end{equation}
There exists a pair  $(q_i, r_i) \in K[t] \times A$ such that 
$(i \lambda_t t') t^{i-1} = q_i^\prime + r_i$ and $\deg(r_i) \le i-1$
by Lemma \ref{LM:aux}. It follows that
\begin{equation} \label{EQ:basis2}
\phi(t')t^i  - r_i = \left(\frac{t^{i+1}}{i+1} - \lambda_t t^i  + q_i \right)^\prime.
\end{equation}
\begin{lemma} \label{LM:basis}
Let $v_0 = \phi(t^\prime)$ and $v_i$ be the left-hand side of \eqref{EQ:basis2}. Then
\begin{itemize}
\item[(i)] $\deg(v_i)=i$ and $\lc(v_i)=\phi(t^\prime)$ for all $i \in \bN$.
\item[(ii)] 
The set $\{ v_0, v_1, \ldots \}$ is 
a $C$-basis of $K[t]' \cap A$.
\end{itemize}
\end{lemma}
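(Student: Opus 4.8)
The plan is to treat the two parts separately, with part (i) being short degree bookkeeping and part (ii) carrying the real content. For (i) I would simply note that $\phi(t^\prime) \in K^\times$ by Remark~\ref{RE:lc}, so $\phi(t^\prime)t^i$ has degree exactly $i$ in $t$ with leading coefficient $\phi(t^\prime)$, whereas $\deg(r_i) \le i-1$ by its construction via Lemma~\ref{LM:aux}. Hence $v_i = \phi(t^\prime)t^i - r_i$ satisfies $\deg(v_i)=i$ and $\lc(v_i)=\phi(t^\prime)$; the case $i=0$ reads $v_0 = \phi(t^\prime)$ directly.

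For (ii) I would first check that each $v_i$ lies in $K[t]^\prime\cap A$: membership in $K[t]^\prime$ is exactly \eqref{EQ:basis2} (with $v_0=(t-\lambda_t)^\prime$), while membership in $A$ follows since both $\phi(t^\prime)t^i$ and $r_i$ lie in $A$. Linear independence over $C$ is immediate because the $v_i$ have pairwise distinct degrees. The whole difficulty is therefore concentrated in showing that $\{v_0, v_1, \dots\}$ spans $K[t]^\prime\cap A$ over $C$, which I would establish by induction on the degree, the inductive step being driven by the following leading-coefficient claim.

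\emph{Key claim:} if $w \in (K[t]^\prime\cap A)^\times$ has $\deg(w)=n$, then $\lc(w) \in C^\times \cdot \phi(t^\prime)$. To prove it, write $w = P^\prime$ with $P = \sum_{j=0}^m b_j t^j \in K[t]$ and $b_m \ne 0$. Since $t^\prime \in K$, the coefficient of $t^m$ in $P^\prime$ equals $b_m^\prime$; as $w \in A$ forces this coefficient into $\im(\phi)$, while $b_m^\prime \in K^\prime = \ker(\phi)$ and the decomposition $K = K^\prime \oplus \im(\phi)$ gives $\im(\phi)\cap K^\prime = 0$, I conclude $b_m^\prime = 0$, i.e. $b_m \in C$. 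The coefficient of $t^{m-1}$ in $w=P^\prime$ is then $b_{m-1}^\prime + m b_m t^\prime$, and it too lies in $\im(\phi)$; applying $\phi$ and using that $\phi$ is idempotent and $C$-linear, kills $b_{m-1}^\prime \in K^\prime$, and fixes $\phi(t^\prime)$, this coefficient must equal $m b_m\,\phi(t^\prime)$. Since $m \ge 1$ (for $m=0$ one would get $w = b_0^\prime \in \im(\phi)\cap K^\prime = 0$), $b_m \in C^\times$ and $\phi(t^\prime) \ne 0$, this coefficient is nonzero; hence $\deg(w)=m-1$ and $\lc(w)=m b_m\,\phi(t^\prime) \in C^\times\cdot\phi(t^\prime)$.

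This claim is precisely where the parametric Risch equation $y^\prime = c t^\prime + a$ enters: the relation $b_{m-1}^\prime + m b_m t^\prime \in \im(\phi)$ is an instance of it, and applying $\phi$ is what pins down the constant $c = m b_m$. Granting the claim, the induction finishes quickly: for $w \in (K[t]^\prime\cap A)^\times$ of degree $n$, write $\lc(w)=c\,\phi(t^\prime)=c\,\lc(v_n)$ with $c \in C^\times$; then $w - c v_n \in K[t]^\prime\cap A$ has degree $<n$, so by the induction hypothesis it is a $C$-combination of $v_0,\dots,v_{n-1}$, whence $w$ is a $C$-combination of $v_0,\dots,v_n$; the base case $n=0$ is the claim itself, $w = c v_0$. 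The main obstacle is thus the leading-coefficient claim, and everything there rests on the direct-sum decomposition $K = K^\prime\oplus\im(\phi)$ together with the idempotency and $C$-linearity of $\phi$, which jointly convert ``being a derivative all of whose coefficients are remainders'' into the rigid constraint $\lc(w)\in C\cdot\phi(t^\prime)$.
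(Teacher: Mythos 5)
Your proposal is correct, and its scaffolding coincides with the paper's: part (i) by the same degree bookkeeping, membership of each $v_i$ in $K[t]'\cap A$ and independence via distinct degrees, then a downward induction on degree driven by the claim that any nonzero element of $K[t]'\cap A$ has leading coefficient in $C^\times\cdot\phi(t')$. The genuine difference is how that claim is justified. The paper quotes \cite[Lemma 2.3]{CDL2018} (a repackaging of two results in \cite{BronsteinBook}; see Remark~\ref{RE:risch}): for any $p\in K(t)'\cap K[t]$ one has $\lc(p)=ct'+b'$ with $c\in C$, $b\in K$, after which membership of $p$ in $A$ and one application of $\phi$ give $\lc(p)=c\,\phi(t')$. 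You instead prove the needed fact from scratch: writing $w=P'$ with $P=\sum_{j\le m}b_jt^j$, you exploit $w\in A$ twice --- the coefficient of $t^m$ in $w$, namely $b_m'$, lies in $\im(\phi)\cap K'=\{0\}$, forcing $b_m\in C^\times$; the coefficient of $t^{m-1}$, namely $b_{m-1}'+mb_mt'$, lies in $\im(\phi)$ and is therefore fixed by $\phi$, hence equals $mb_m\phi(t')\neq 0$. This is a self-contained and more elementary route: it inlines precisely the special case of \cite[Lemma 2.3]{CDL2018} that is needed, shortened by the extra hypothesis $w\in A$ (which kills the case $\lc(P)\notin C$ immediately), whereas the paper gets the structure of $\lc(w)$ for free by citation at the cost of invoking a result valid for all of $K(t)'\cap K[t]$. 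Your version also delivers slightly more information ($\deg(w)=\deg(P)-1$ and the exact constant $mb_m$) and makes transparent that the whole lemma rests only on the decomposition $K=K'\oplus\im(\phi)$ and the idempotency of $\phi$.
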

\begin{proof}
(i) holds because $\phi(t^\prime) \neq 0$ and $r_i$ in \eqref{EQ:basis2} has degree $<i$.

(ii) Set $I=K[t]^\prime \cap A$. Then $v_0 \in I^\times$
 by Remark \ref{RE:lc} and Definition \ref{DEF:aux}. 
For $i>0$, $v_i \in K[t]^\prime$ by \eqref{EQ:basis2}. It is in $A$ because $\phi(t^\prime) t^i, r_i \in A$.
Thus, $v_i \in I$ for all $i \in \bN$.
The $v_i$'s are $C$-linearly independent by (i).

Assume that $p \in I$. 
Then $p \in K(t)' \cap K[t]$. It follows from  \cite[Lemma 2.3]{CDL2018} that
$\lc(p) = c t^\prime + b^\prime$ for some $c \in C$ and $b \in K$.
On the other hand, $p \in A$ implies that $\lc(p) \in \im(\phi)$.
Hence, applying $\phi$ to $\lc(p) = c t^\prime + b^\prime$ yields
$\lc(p)=c \phi(t^\prime),$ because $\phi$ is
an idempotent and  $\phi(b')=0$.
Let $i=\deg(p)$ and $q = p - c v_i$. Then $q \in I$ with $\deg(q)<i$.
Thus, $p$ is a $C$-linear combination of $v_0, \ldots, v_i$ by a straightforward induction on $i$. 
\end{proof}
\begin{remark} \label{RE:risch}
\cite[Lemma 2.3]{CDL2018} cited in the above proof combines two results in \cite{BronsteinBook} into one statement. 
A referee of our manuscript points out that the two results are given in equation (5.13)
and the second paragraph on
page 176, respectively.
\end{remark}

Calculations in \eqref{EQ:basis1} and  \eqref{EQ:basis2} lead to a naive 
algorithm to construct the basis $\{v_0, v_1, \ldots \}$ of $K[t]^\prime \cap A$ in Lemma \ref{LM:basis} (ii) up to a given degree $d$.
A more elaborated algorithm suggested  by a referee minimizes R-pairs to be computed. To present the algorithm, we need some notation
and a technical lemma.

Let $(\lambda_t, \phi(t^\prime))$ be the first pair associated to $K(t)$. Set $\mu_0 = \lambda_t$ and
$\left(\mu_{k+1}, \nu_{k+1}\right)$ to be an R-pair of $\mu_k t^\prime$ with respect to $\phi$ for all $k \in \bN$.
 Furthermore, 
 we define two families of differential operators:
\begin{equation} \label{EQ:ops}
  L_{i,j} = \sum_{k=1}^i (-1)^{k+1} \mu_{j+k} D_t^{k} \quad \text{and} \quad
    M_{i,j} = \sum_{k=1}^i (-1)^{k+1}  \nu_{j+k}  D_t^{k} 
\end{equation}
for all $i \in \bZ^+$ and $j \in \bN$, where $D_t$ is the same as in Section \ref{SUBSECT:log}.
 \begin{lemma} \label{LM:rec}
With the notation just introduced, we have 
$$i \mu_j t^\prime t^{i-1} =  \left(L_{i, j}(t^i)\right)^\prime  + M_{i,j}(t^i)$$
for all $i \in \bZ^+$ and $j  \in \bN$. 
\end{lemma}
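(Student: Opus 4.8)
The plan is to prove the identity by a direct differentiation of $L_{i,j}(t^i)$, using two ingredients: the Leibniz rule on the primitive monomial extension and the defining recurrence of the R-pairs $(\mu_m,\nu_m)$. An alternative is an induction on $i$, which I indicate at the end.

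First I would record the two facts I rely on. Since $t$ is a primitive monomial, $t'\in K$, so for $a\in K$ and $n\in\bN$ one has $(a t^n)' = a' t^n + n\,a\,t'\,t^{n-1}$; in operator form $p' = \kappa(p) + t'D_t(p)$. Second, the definition of $(\mu_{m},\nu_{m})$ as an R-pair of $\mu_{m-1}t'$ gives $\mu_{m-1}t' = \mu_m' + \nu_m$, i.e. $\mu_m' = \mu_{m-1}t' - \nu_m$ for every $m\ge 1$; this is exactly the device that lets me trade a differentiated coefficient $\mu_{j+k}'$ for the term $\mu_{j+k-1}t'$ (raising the index of the $t'$-factor) plus a remainder $-\nu_{j+k}$.

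Next I would expand $L_{i,j}(t^i)=\sum_{k=1}^i (-1)^{k+1}\frac{i!}{(i-k)!}\mu_{j+k}t^{i-k}$, differentiate term by term with the Leibniz rule, and substitute $\mu_{j+k}' = \mu_{j+k-1}t' - \nu_{j+k}$. This splits $(L_{i,j}(t^i))'$ into three sums: one carrying $\mu_{j+k-1}t'\,t^{i-k}$, one carrying $-\nu_{j+k}t^{i-k}$, and one carrying $(i-k)\mu_{j+k}t'\,t^{i-k-1}$ from differentiating the power of $t$. The $\nu$-sum is immediately recognized as $-M_{i,j}(t^i)$ (matching $D_t^k(t^i)=\frac{i!}{(i-k)!}t^{i-k}$), so moving it to the other side already produces the $+M_{i,j}(t^i)$ demanded by the statement. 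It then remains to show that the first and third sums add up to the single term $i\mu_j t'\,t^{i-1}$.

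The heart of the argument, and the main bookkeeping obstacle, is this last cancellation. After reindexing the first sum by $m=k-1$ and the third by $m=k$ (noting the $k=i$ summand of the third sum vanishes because of its factor $i-k$), both become sums over the same terms $\mu_{j+m}t'\,t^{i-1-m}$ with coefficients $\frac{i!}{(i-1-m)!}$, but with opposite signs for every $m$ with $1\le m\le i-1$; these telescope to zero. The only surviving contribution is the $m=0$ term of the first sum, namely $\frac{i!}{(i-1)!}\mu_j t'\,t^{i-1}=i\mu_j t'\,t^{i-1}$, which is precisely the left-hand side. Care is needed only with the boundary indices ($k=1$, $k=i$) and the index shift, but no new idea enters. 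Alternatively, one can run an induction on $i$: the base case $i=1$ is literally the R-pair relation $\mu_j t' = \mu_{j+1}'+\nu_{j+1}$, while the step uses a single integration by parts, $i\mu_j t' t^{i-1}=(\mu_{j+1}D_t(t^i))' + \nu_{j+1}D_t(t^i) - i(i-1)\mu_{j+1}t^{i-2}t'$, together with the operator identities $L_{i,j}(t^i)=\mu_{j+1}D_t(t^i)-iL_{i-1,j+1}(t^{i-1})$ and $M_{i,j}(t^i)=\nu_{j+1}D_t(t^i)-iM_{i-1,j+1}(t^{i-1})$ (both consequences of $D_t^k(t^i)=i\,D_t^{k-1}(t^{i-1})$), reducing the claim for $(i,j)$ to the hypothesis for $(i-1,j+1)$.
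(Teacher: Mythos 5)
Your proposal is correct, and your primary argument takes a genuinely different route from the paper's. The paper proves the identity by induction on $i$ (with $j$ kept arbitrary): the base case is the R-pair relation $\mu_j t^\prime = \mu_{j+1}^\prime + \nu_{j+1}$, and the induction step performs one integration by parts on $(\mu_{j+1}^\prime + \nu_{j+1})t^{i-1}$, applies the hypothesis to $(i-1)\mu_{j+1}t^\prime t^{i-2}$ at the index pair $(i-1,j+1)$, and repackages the result via $D_t^{k+1}(t^i) = i\,D_t^{k}(t^{i-1})$ --- this is precisely the alternative you sketch in your last sentences, down to the operator identities $L_{i,j}(t^i)=\mu_{j+1}D_t(t^i)-iL_{i-1,j+1}(t^{i-1})$ and its $M$-analogue. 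Your main proof, by contrast, is a one-pass, induction-free verification: differentiate $L_{i,j}(t^i)=\sum_{k=1}^i(-1)^{k+1}\frac{i!}{(i-k)!}\mu_{j+k}t^{i-k}$ by the Leibniz rule, trade every $\mu_{j+k}^\prime$ for $\mu_{j+k-1}t^\prime-\nu_{j+k}$ using the R-pair recurrence, identify the $\nu$-sum with $-M_{i,j}(t^i)$, and observe that after the shift $m=k-1$ the two remaining $t^\prime$-sums carry the common coefficient $\frac{i!}{(i-1-m)!}$ with opposite signs $(-1)^m$ and $(-1)^{m+1}$ for $1\le m\le i-1$, so everything cancels except the $m=0$ term $i\mu_j t^\prime t^{i-1}$; I checked this bookkeeping, including the vanishing $k=i$ summand of the third sum, and it is sound. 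What each approach buys: your direct computation is self-contained and makes the cancellation mechanism fully explicit, at the price of factorial and sign bookkeeping; the paper's induction keeps each step nearly calculation-free and mirrors the iterative way Algorithm \ref{ALG:basis} actually builds $L_{i,0}(t^i)$ and $M_{i,0}(t^i)$ in practice. One terminological nit: your paired terms cancel pairwise rather than \lq\lq telescope\rq\rq.
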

\begin{proof}
We proceed by induction on $i$ and regard $j$ as an arbitrary nonnegative integer. 
For $i = 1$,  $\mu_j t^\prime  = \mu_{j+1}^\prime  + \nu_{j+1}$ by definition.  So $\mu_j t^\prime  =  L_{1, j}(t)^\prime + M_{1,j}(t)$  by \eqref{EQ:ops}.
The conclusion holds for $i=1$. 

Assume that $i>1$ and the conclusion holds for values lower than $i$ and every $j \in \bN$. We calculate
\begin{align*}
 \mu_j t^\prime t^{i-1} = & \,  \left(\mu_{j+1}^\prime +  \nu_{j+1} \right) t^{i-1} \\
=  & \,  \left(\mu_{j+1} t^{i-1} \right)^\prime -  \mu_{j+1} \left( t^{i-1}\right)^\prime +  \nu_{j+1} t^{i-1}  \\
= & \,   \left(\mu_{j+1} t^{i-1} \right)^\prime -  (i-1) \mu_{j+1} t^\prime t^{i-2} + \nu_{j+1} t^{i-1}  \\
= & \left( \mu_{j+1} t^{i-1} {-}  L_{i-1,j+1}(t^{i-1}) \right)^\prime  
{+}  \nu_{j+1} t^{i-1} {-}  M_{i-1,j+1}(t^{i-1}) \\
= & i^{-1} \left(L_{i, j}(t^i)\right)^\prime  +  i^{-1} M_{i,j}(t^i),
\end{align*}
in which the first equality holds because $ \mu_j t^\prime = \mu_{j+1}^\prime + \nu_{j+1};$ 
 the second is derived from integration by parts, the third is by a direct calculation; the fourth is due to the induction hypothesis 
$$(i-1) \mu_{j+1} t^\prime t^{i-2} = \left(L_{i-1,j+1}(t^{i-1})\right)^\prime  + M_{i-1,j+1}(t^{i-1});$$
and the last holds by replacing $i D_t^k(t^{i-1})$ with $D_t^{k}(t^i)$ and shifting the index $k$ to $k+1$ in \eqref{EQ:ops}. 
\end{proof}

\begin{cor} \label{COR:basis}
For all $i \in \bZ^+$, the element $v_i$ in Lemma \ref{LM:basis} can be taken
as $\phi(t^\prime) t^i - M_{i,0}(t^i)$, which is equal to 
$\left( \frac{t^{i+1}}{i+1} - \lambda_t t^i + L_{i,0}(t^i) \right)^\prime.$
\end{cor}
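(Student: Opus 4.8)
The plan is to recognize Corollary \ref{COR:basis} as a direct repackaging of Lemma \ref{LM:rec} specialized to $j = 0$, and then to read off both asserted expressions for $v_i$. First I would recall how $v_i$ was defined: by \eqref{EQ:basis1} we have $\phi(t^\prime) t^i = \left(\frac{t^{i+1}}{i+1} - \lambda_t t^i\right)^\prime + (i \lambda_t t^\prime) t^{i-1}$, and $v_i = \phi(t^\prime) t^i - r_i$, where $(q_i, r_i) \in K[t] \times A$ is \emph{any} pair furnished by Lemma \ref{LM:aux} with $(i \lambda_t t^\prime) t^{i-1} = q_i^\prime + r_i$ and $\deg(r_i) \le i - 1$; the second form of $v_i$ in \eqref{EQ:basis2} corresponds to this same pair. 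Hence it suffices to exhibit $q_i = L_{i,0}(t^i)$ and $r_i = M_{i,0}(t^i)$ as one admissible such pair.

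The key step is to invoke Lemma \ref{LM:rec} with $j = 0$. Since $\mu_0 = \lambda_t$, that lemma produces exactly the identity $(i \lambda_t t^\prime) t^{i-1} = \left(L_{i,0}(t^i)\right)^\prime + M_{i,0}(t^i)$, which is the desired decomposition with $q_i = L_{i,0}(t^i)$ and $r_i = M_{i,0}(t^i)$. It then remains to verify the two side conditions on $r_i$ imposed in \eqref{EQ:basis2}. For the membership $M_{i,0}(t^i) \in A$, I would expand via \eqref{EQ:ops} to $M_{i,0}(t^i) = \sum_{k=1}^i (-1)^{k+1} \nu_k D_t^k(t^i)$, note that each $D_t^k(t^i)$ is a $C$-multiple of $t^{i-k}$, and use that $\nu_k \in \im(\phi)$ because $(\mu_k, \nu_k)$ is an R-pair of $\mu_{k-1} t^\prime$; thus every summand lies in $\im(\phi) \cdot t^{i-k} \subseteq A$. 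The degree bound $\deg(M_{i,0}(t^i)) \le i - 1$ is immediate, since $D_t^k(t^i)$ has degree $i - k \le i - 1$ for $k \ge 1$, while $L_{i,0}(t^i) \in K[t]$ trivially.

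With this admissible pair in hand, the defining relation yields $v_i = \phi(t^\prime) t^i - M_{i,0}(t^i)$, and \eqref{EQ:basis2} with $q_i = L_{i,0}(t^i)$ yields the equivalent form $\left(\frac{t^{i+1}}{i+1} - \lambda_t t^i + L_{i,0}(t^i)\right)^\prime$. The only point requiring care — and the step I would flag as the mild obstacle — is confirming that the output of Lemma \ref{LM:rec} really is an \emph{admissible} choice for Lemma \ref{LM:aux}, i.e.\ that $M_{i,0}(t^i)$ lands in $A$ with the correct degree, since the decomposition in Lemma \ref{LM:aux} is non-unique and $v_i$ is tied to selecting $r_i \in A$ of degree strictly below $i$. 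Once that bookkeeping is settled, the corollary follows with no further computation.
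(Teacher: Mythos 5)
Your proof is correct and follows essentially the same route as the paper's: specialize Lemma \ref{LM:rec} to $j=0$, use $\mu_0=\lambda_t$ to match the term $(i\lambda_t t')t^{i-1}$ from \eqref{EQ:basis1}, and then take $q_i = L_{i,0}(t^i)$, $r_i = M_{i,0}(t^i)$ in \eqref{EQ:basis2}. The only difference is that you spell out the verification that $M_{i,0}(t^i)$ lies in $A$ with degree at most $i-1$ (via the expansion from \eqref{EQ:ops} and the fact that each $\nu_k \in \im(\phi)$), which the paper merely asserts; this is a welcome bit of extra rigor, not a different argument.
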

\begin{proof} It follows from Remark \ref{RE:lc} that 
\[ \phi(t^\prime) t^i = \left( \frac{t^{i+1}}{i+1} - \lambda_t t^i \right)^\prime + i \lambda_t t^\prime t^{i-1}. \]
Setting $j=0$ in Lemma \ref{LM:rec} and noticing $\mu_0 = \lambda_t$, we see that 
\[ \phi(t^\prime) t^i = \left( \frac{t^{i+1}}{i+1} - \lambda_t t^i + L_{i,0}(t^i) \right)^\prime + M_{i,0}(t^i). \]
Since $M_{i,0}(t^i)$ belongs to $A$ and is of degree less than $i$, the corollary holds when we set $q_i = L_{i,0}(t^i)$
and $r_i = M_{i,0}(t^i)$ in \eqref{EQ:basis2}. 
\end{proof}


\begin{alg}{\sc Basis} \label{ALG:basis}

\noindent  
{\sc Input:} $d \in \bN$ and the first pair $(\lambda_t, \phi(t^\prime))$ associated to $K(t)$ 

\noindent
{\sc Output:}  
 $(u_0,v_0), \ldots, (u_d,v_d)$, where  
$(u_0, v_0) {=} \left(t - \lambda_t, \, \phi(t^\prime) \right)$
and, for  all  $i \in [d]$,
$
(u_i, v_i) = \left(\frac{t^{i+1}}{i+1} - \lambda_t t^i + L_{i,0}(t^i), \, \phi(t^\prime) t^i - M_{i,0}(t^i)\right)$
with $L_{i,0}$ and $M_{i,0}$ given in \eqref{EQ:ops}.

\end{alg}
We skip the pseudo-code for the above algorithm, because both $L_{i,0}(t^i)$ and $M_{i,0}(t^i)$ can be easily
constructed iteratively according to \eqref{EQ:ops}. In the algorithm, one  computes R-pairs
$(\mu_i, \phi(\mu_{i-1} t^\prime))$ for  every $i \in [d]$, whereas the number of R-pairs needed is proposional
to $d^2$ if one  constructs $v_1, \ldots, v_d$ by Algorithm \ref{ALG:ar} directly.


Now, we turn the sum in Corollary \ref{COR:aux} to a direct one by constructing a subspace of $A$ that is a complement of $K[t]^\prime$.
To proceed,  
we need to assume further that $K$ has an effective  $C$-basis, which is denoted by $\Theta$. 
Then there exists a pair $(\theta, c) \in \Theta \times C^\times$ such that $c = \theta^*(\phi(t^\prime))$. 
We fix such a pair and call it the {\em second pair associated to $K(t)$}. A complementary subspace
consists of polynomials in $A$ whose coefficients are free of $\theta$. 
In other words, the subspace is equal to $\left(\im(\phi) \cap \ker(\theta^*) \right)\otimes C[t]$.

\begin{lemma} \label{LM:direct}
 Let $(\theta,c)$ be the second pair associated to $K(t)$. Then
 \begin{itemize}
 \item[(i)] $A = ( K[t]' \cap A )\oplus A_\theta,$
where  
$A_\theta = \left(\im(\phi) \cap \ker\left(\theta^* \right) \right) \otimes C[t];$ 
\item[(ii)] $K[t]=K[t]^\prime \oplus A_\theta$.
\end{itemize}
 \end{lemma}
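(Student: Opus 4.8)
The plan is to reduce both statements to the leading-coefficient structure of the basis $\{v_0, v_1, \ldots\}$ of $I := K[t]^\prime \cap A$ furnished by Lemma \ref{LM:basis}, using the single functional $\theta^*$ as a separator. The crucial facts are that $\deg(v_i) = i$ and $\lc(v_i) = \phi(t^\prime)$ for every $i$, and that $\theta^*(\phi(t^\prime)) = c \neq 0$ by the choice of the second pair. I would first establish (i) and then derive (ii) from (i) together with Corollary \ref{COR:aux}. Throughout, note that both $I$ and $A_\theta$ are $C$-subspaces of $A$.

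For (i), the easy half is the trivial intersection $I \cap A_\theta = \{0\}$. Take a nonzero $p \in I$ of degree $d$; by Lemma \ref{LM:basis}(ii) it is a $C$-combination $\sum_{i \le d} c_i v_i$ with $c_d \neq 0$, and since only $v_d$ reaches degree $d$ we get $\lc(p) = c_d \phi(t^\prime)$, whence $\theta^*(\lc(p)) = c_d c \neq 0$. But every coefficient of an element of $A_\theta$ lies in $\ker(\theta^*)$, so $p \in A_\theta$ would force $\theta^*(\lc(p)) = 0$, a contradiction. Hence $I \cap A_\theta = \{0\}$.

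The main work, and the step I expect to be the real obstacle, is the spanning identity $I + A_\theta = A$, which I would prove by a top-down induction on degree. Given $a \in A$ of degree $\le d$, write $a = \sum_{i \le d} a_i t^i$ with $a_i \in \im(\phi)$ and set $c_d = \theta^*(a_d)/c \in C$. Subtracting $c_d v_d$ cancels the $\theta^*$-part of the top coefficient: the coefficient of $t^d$ in $a - c_d v_d$ is $a_d - c_d \phi(t^\prime) \in \im(\phi) \cap \ker(\theta^*)$, so this term already lies in $A_\theta$, while the remaining lower-order part of $a - c_d v_d$ stays in $A$ and has degree $< d$. Applying the induction hypothesis to this remainder and collecting the $v$-terms into $I$ and the rest into $A_\theta$ yields the decomposition; the base case $d = 0$ is the single-coefficient version of the same step. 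The subtlety to watch is that $v_d$ carries lower-order terms, but since all coefficients of $v_d$ live in $\im(\phi)$, hence in $A = \im(\phi) \otimes C[t]$, these terms are harmlessly absorbed by the induction. Combining with the previous paragraph gives $A = I \oplus A_\theta$.

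Finally, (ii) follows formally. By Corollary \ref{COR:aux}, $K[t] = K[t]^\prime + A$, and substituting $A = I \oplus A_\theta$ with $I = K[t]^\prime \cap A \subseteq K[t]^\prime$ gives $K[t] = K[t]^\prime + A_\theta$. The sum is direct because $A_\theta \subseteq A$ forces $K[t]^\prime \cap A_\theta \subseteq (K[t]^\prime \cap A) \cap A_\theta = I \cap A_\theta = \{0\}$ by (i). Hence $K[t] = K[t]^\prime \oplus A_\theta$.
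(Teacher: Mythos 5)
Your proof is correct and takes essentially the same route as the paper's: both halves rest on Lemma \ref{LM:basis}, the spanning identity $A = (K[t]^\prime \cap A) + A_\theta$ is obtained by subtracting $c^{-1}\theta^*(\lc(\cdot))\,v_d$ to push the top coefficient into $\im(\phi)\cap\ker(\theta^*)$ and then inducting on degree, and (ii) is deduced from (i) and Corollary \ref{COR:aux} exactly as in the paper. The only cosmetic difference is the order of the two halves (you settle the trivial intersection first, the paper does spanning first).
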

 \begin{proof} 
 (i) Similar to the proof of Lemma \ref{LM:basis}, we set $I= K[t]' \cap A $.
 
 First, we show  $A = I +  A_\theta.$
 Since $I \subset A$ and $A_\theta \subset A$,
 it suffices to show  $A \subset  I +  A_\theta.$
 Let $\{ v_0, v_1,  \ldots \}$ be the basis of $I$ in Lemma \ref{LM:basis} (ii), and $p \in A$.
 Set $d {=} \deg(p)$, $l {=} \lc(p)$ and $z {=} \theta^*(l)$. By Lemma \ref{LM:basis} (i),   
 \begin{equation} \label{EQ:elim}
 p   - c^{-1} z v_d =  g t^d + h,
 \end{equation}
 where $g=l - c^{-1} z \phi(t^\prime)$ and $h \in K[t]$ with $\deg(h)<d$.
 Since $p \in A$, we have that $ l \in \im(\phi)$, and, thus, $g \in \im(\phi)$ by its definition.
 Furthermore, $\theta^*(g) = \theta^*(l) - c^{-1} z \theta^*(\phi(t^\prime)) = z - z=0$.
 Hence, $g \in \ker(\theta^*)$. Consequently, $g \in \im(\phi) \cap \ker(\theta^*)$. We conclude that $g t^d \in A_\theta$.
 It follows from \eqref{EQ:elim} that $h \in A$ and $p - h \in I + A_\theta$, which allow us to
 carry out an induction on $d$ as follows.
 
 If $d=0$, then $h=0$. So $p \in I +  A_\theta.$ 
 Assume that $d>0$ and that all elements of~$A$ with degree lower than $d$ are in $I +  A_\theta.$ 
 Then $h \in I +  A_\theta.$  Hence, $p \in I +  A_\theta$.

%

 Second, we show that $I \cap A_\theta = \{0\}$.
 Assume that $q \in I \cap  A_\theta$. Then $q$ is a $C$-linear combination of the $v_i$'s. 
 So  
 $\lc(q)$ is the product of a constant and $\phi(t^\prime)$ by  Lemma \ref{LM:basis} (i).
 Since $\lc(q) \in \ker(\theta^*)$ and $\phi(t^\prime) \notin \ker(\theta^*)$, the constant is equal to zero, and so is $\lc(q)$.
 Accordingly, $q=0$.

 (ii) By Corollary \ref{COR:aux} and (i), $K[t]=K[t]^\prime + A_\theta$.
Since $A_\theta \subset A$, we have that
 $K[t]^\prime \cap A_\theta = \left(K[t]^\prime \cap A \right)\cap A_\theta$, which is equal to $\{0\}$ by (i). 
So (ii) holds. 
 \end{proof}

The $C$-subspace  $A_\theta$ in Lemma \ref{LM:direct} is called the {\em $\theta$-complement} of $K[t]^\prime$ in $K[t]$ in the rest 
of this section.
The next algorithm projects an element of $A$ to $K[t]^\prime$ and the $\theta$-complement, respectively.
It is correct by the proof of Lemma \ref{LM:direct} (i).
\begin{alg}{\sc Projection} \label{ALG:proj}

\noindent 
{\sc Input:} $r \in A$,  the first and second pairs $(\lambda_t, \phi(t^\prime))$ and $(\theta, c)$  
 associated to $K(t)$ 

\noindent 
{\sc Output:} $(u, v) \in K[t] \times A_\theta$ such that
\begin{equation} \label{EQ:proj}
r = u^\prime + v
\end{equation}
\begin{enumerate}
\item[1.] $u \leftarrow 0$, $v \leftarrow r$, $d \leftarrow \deg(r)$
\item[2.] $B \leftarrow$ {\sc  Basis}$(d, \lambda_t, \phi(t^\prime))$ \,\,  $(^*$Algorithm \ref{ALG:basis} $^*)$
\item[3.] {\sc for} $i$ {\sc from} $0$ {\sc to} $d$ {\sc do}
\begin{itemize}
\item[]$a \leftarrow$ the coefficient of $t^{d-i}$ in $v$, 
$b \leftarrow \theta^*(a)$

$(\tilde{u}, \tilde{v}) \leftarrow$ the element of $B$ with 
$\deg(\tilde v)=d-i$,

$\tilde{c} \leftarrow c^{-1}b,$ \, $u \leftarrow u + \tilde{c} \tilde{u}$, \, $v \leftarrow v - \tilde{c} \tilde{v}$
\end{itemize}
{\sc end do}
\item[4.] {\sc return} $(u, v)$
\end{enumerate}
\end{alg}

We are ready to present the main result of this section.
\begin{thm} \label{TH:decomp}
 Let $(\theta,c)$ be the second pair associated to $K(t)$, and $A_\theta$ be the $\theta$-complement of $K[t]^\prime$.
Then
  $K(t) = K(t)^\prime \oplus A_\theta  \oplus S_t.$
Moreover, the projection $\psi_\theta$ from $K(t)$ to $A_\theta \oplus S_t$ with respect to the above
direct sum is a complete reduction for $K(t)^\prime$.
\end{thm}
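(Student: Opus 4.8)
The plan is to derive the direct-sum decomposition purely by assembling the decomposition and contraction facts already established, and then to read off the complete-reduction property directly from its definition. The decomposition is the substantive half; the ``moreover'' clause is then immediate.

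For the decomposition, I would start from Lemma \ref{LM:reduce} (i), which gives $K(t) = (K(t)' + K[t]) \oplus S_t$, and concentrate on rewriting the middle summand $K(t)' + K[t]$. The first move is to substitute $K[t] = K[t]^\prime \oplus A_\theta$ from Lemma \ref{LM:direct} (ii); since $K[t]^\prime \subseteq K(t)^\prime$, the polynomial-derivative part is absorbed and the sum collapses to $K(t)' + K[t] = K(t)' + A_\theta$. To see this sum is direct I would compute $K(t)' \cap A_\theta$: because $A_\theta \subseteq A \subseteq K[t]$, the intersection $K(t)' \cap A_\theta$ equals $(K(t)' \cap K[t]) \cap A_\theta$, and Lemma \ref{LM:reduce} (ii) identifies $K(t)' \cap K[t]$ with $K[t]'$, so the intersection reduces to $K[t]' \cap A_\theta$, which vanishes by the directness asserted in Lemma \ref{LM:direct} (ii). Hence $K(t)' + K[t] = K(t)' \oplus A_\theta$, and feeding this back into Lemma \ref{LM:reduce} (i) yields $K(t) = (K(t)' \oplus A_\theta) \oplus S_t$. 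The threefold sum is genuinely direct because $K(t)' \oplus A_\theta$ is direct and its intersection with $S_t$ is trivial (again by Lemma \ref{LM:reduce} (i)), which gives the claimed $K(t) = K(t)^\prime \oplus A_\theta \oplus S_t$.

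For the second assertion I would invoke the definition of a complete reduction recorded in the introduction. The operator $\psi_\theta$ is the projection onto $A_\theta \oplus S_t$ along $K(t)^\prime$ with respect to the direct sum just proved, hence $C$-linear and idempotent. Writing an arbitrary $f \in K(t)$ as $f = a + b$ with $a \in K(t)^\prime$ and $b \in A_\theta \oplus S_t$, one gets $\psi_\theta(f) = b$, so $f - \psi_\theta(f) = a \in K(t)^\prime$, which is the first defining condition. For the kernel condition, $\psi_\theta(f) = 0$ holds exactly when $b = 0$, i.e. when $f \in K(t)^\prime$, so $\ker(\psi_\theta) = K(t)^\prime$. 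Both conditions hold, so $\psi_\theta$ is a complete reduction for $K(t)^\prime$.

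I expect the only point requiring care to be the intersection identity $K(t)' \cap A_\theta = K[t]' \cap A_\theta$, whose purpose is precisely to transfer a statement about derivatives in the full field $K(t)$ into one about derivatives of polynomials, where the machinery already built---the basis of $K[t]' \cap A$ from Lemma \ref{LM:basis} and the $\theta$-complement from Lemma \ref{LM:direct}---does the work. Everything else is bookkeeping with direct sums, and no new construction or estimate is needed beyond what Lemmas \ref{LM:reduce} and \ref{LM:direct} supply.
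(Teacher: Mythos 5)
Your proof is correct and follows essentially the same route as the paper's: both obtain $K(t) = (K(t)' + A_\theta) \oplus S_t$ from Lemma \ref{LM:reduce}~(i) together with Lemma \ref{LM:direct}, and both establish directness via the identity $K(t)' \cap A_\theta = K[t]' \cap A_\theta$ (using $A_\theta \subset K[t]$ and Lemma \ref{LM:reduce}~(ii)), which vanishes by Lemma \ref{LM:direct}~(ii). Your explicit verification of the complete-reduction axioms for $\psi_\theta$ is merely a spelled-out version of what the paper leaves as ``it follows.''
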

\begin{proof} By Lemma \ref{LM:reduce} (i) and Lemma \ref{LM:direct}, $$K(t) = (K(t)^\prime  + A_\theta) \oplus S_t.$$ 
By Lemma \ref{LM:reduce} (ii) and $A_\theta \subset K[t]$, 
we have $K(t)^\prime \cap A_\theta = K[t]^\prime \cap A_\theta$,
which 
is trivial by Lemma \ref{LM:direct} (ii). So $K(t) = K(t)^\prime \oplus A_\theta  \oplus S_t$.
It follows that  $\psi_\theta$ is a complete reduction for $K(t)^\prime$. 
\end{proof}

Below is an algorithm for the complete reduction given in the above theorem.

\begin{alg}{\sc CompleteReduction} \label{ALG:cr}

\noindent  
{\sc Input:} $f \in K(t)$, the first and second pairs $(\lambda_t, \phi(t^\prime))$ and $(\theta, c)$  
 associated to $K(t)$ 

\noindent  
{\sc Output:} an $R$-pair of $f$ with respect to $\psi_\theta$ in Theorem \ref{TH:decomp}
\begin{enumerate}
\item[1.] $(g, p, s) \leftarrow$ {\sc HermiteReduce}$(f)$ \,\,  
$(^*$\cite[\S 5.3]{BronsteinBook} $^*)$

{\sc if} $p=0$ {\sc then} {\sc return} $(g, s)$ {\sc end if}
\item[2.] $(q, r) \leftarrow$ {\sc AuxiliaryReduction}$(p)$ \,\, $(^*$Algorithm \ref{ALG:ar} $^*)$

{\sc if} $r=0$ {\sc then} {\sc return} $(g+q, s)$ {\sc end if}


\item[3.] $(u,v) \leftarrow$ {\sc Projection}$(r, \lambda_t, \phi(t^\prime), \theta, c)$ \,\, $(^*$Algorithm \ref{ALG:proj} $^*)$

{\sc return} $(g+q+u, s+v)$
\end{enumerate}
\end{alg}
\begin{example} \label{EX:cr}
Let $K(t)$ and $f$ be given in Example \ref{EX:hr}, and $\Theta$ be the $C$-basis given in \eqref{EQ:basis0} with $F=C$ and $y=x$.
The first and second  associated pairs are $(0, x^{-1})$ and $(x^{-1}, 1)$,
respectively. 
The above algorithm computes an $R$-pair of $f$ as follows. 

\begin{enumerate}
\item[1.] $(g,p,s)=\left(0, \frac{x+1}{x} t+ \frac{{x}^{2}+x+1}{x},-\frac{x}{t+1}\right)$ 
by Example \ref{EX:hr}.
\item[2.] Algorithm \ref{ALG:ar} finds $(q,r)=\left(xt+ \frac{x^2}{2}, \frac{t+1}{x}\right) \in K[t] \times A$ such that \eqref{EQ:ar} holds, 
where $A = S_x \otimes C[t]$. 
\item[3.] Algorithm \ref{ALG:proj} finds $(u,v) = \left(\frac{t^2}{2}+t, 0\right)$ such that \eqref{EQ:proj} holds.
\end{enumerate}
Thus, $p=(q+u)^\prime$ and $(g+q+u, s)$ is an R-pair of $f$. 
Algorithm \ref{ALG:cr} finds $s = -\frac{x}{t+1}$ as a \lq\lq minimal\rq\rq\ non-in-field integrable part.
\end{example}


At last, we describe the restriction of $\psi_\theta$ to $K$.
\begin{cor} \label{COR:cr}
Let $\phi: K \rightarrow K$ be a complete reduction for $K^\prime$, $(\theta,c)$ be the second pair associated to $K(t)$
and $\psi_\theta$ be the complete reduction given in Theorem \ref{TH:decomp}. Then,
for every $f \in K$, we have that $\psi_\theta(f) = \phi(f) + \tilde{c} \phi(t^\prime)$, where $\tilde{c}  = - \theta^*\left(\phi(f)\right) c^{-1}.$
\end{cor}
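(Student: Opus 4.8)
The plan is to compute $\psi_\theta(f)$ directly for $f \in K$ by following the three-stage structure of Algorithm~\ref{ALG:cr} and tracking what each stage does to an element of the ground field $K \subset K(t)$. The key observation is that for $f \in K$, the Hermite reduction is trivial: an element of $K$ is already a polynomial in $t$ of degree $0$ with no proper part, so {\sc HermiteReduce}$(f)$ returns $(g,p,s)=(0,f,0)$, and the whole computation reduces to applying {\sc AuxiliaryReduction} and then {\sc Projection} to the constant polynomial $p=f$.

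First I would run the auxiliary reduction on $p = f \in K$. Since $\deg(f)=0$ and $\lc(f)=f$, the single pass of Algorithm~\ref{ALG:ar} uses the R-pair $(g_0,\phi(f))$ of $f$ with respect to $\phi$, and equation~\eqref{EQ:normal} (with $d=0$) collapses to $f = g_0' + \phi(f)$. Thus {\sc AuxiliaryReduction}$(f)$ returns $(q,r)$ with $q=g_0 \in K$ and $r = \phi(f) \in \im(\phi) \subset A$, a degree-$0$ element. Next I would feed this $r=\phi(f)$ into {\sc Projection}. Because $\deg(r)=0$, the loop in Algorithm~\ref{ALG:proj} executes once, using the degree-$0$ basis element $(u_0,v_0)=(t-\lambda_t,\phi(t'))$ from Lemma~\ref{LM:basis}. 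The coefficient of $t^0$ in $r$ is $\phi(f)$ itself, so $b = \theta^*(\phi(f))$ and $\tilde c = c^{-1}b = \theta^*(\phi(f))\,c^{-1} = -\tilde c$ in the notation of the statement; more precisely, setting $\tilde c_{\mathrm{alg}} = c^{-1}\theta^*(\phi(f))$, the projection returns $u = \tilde c_{\mathrm{alg}}(t-\lambda_t)$ and $v = \phi(f) - \tilde c_{\mathrm{alg}}\,\phi(t')$.

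Assembling the pieces, the remainder output of Algorithm~\ref{ALG:cr} is $s+v = 0 + \bigl(\phi(f) - \tilde c_{\mathrm{alg}}\,\phi(t')\bigr)$. Writing $\tilde c = -\tilde c_{\mathrm{alg}} = -\theta^*(\phi(f))\,c^{-1}$ gives exactly $\psi_\theta(f) = \phi(f) + \tilde c\,\phi(t')$, as claimed. The only point requiring a little care — and the place I expect any subtlety to hide — is confirming that $\psi_\theta(f)$ genuinely lands in $A_\theta \oplus S_t$, i.e. that the returned remainder is in the image of the complete reduction rather than merely satisfying the decomposition equation; this amounts to checking $\theta^*\bigl(\phi(f) - \tilde c_{\mathrm{alg}}\phi(t')\bigr) = \theta^*(\phi(f)) - \tilde c_{\mathrm{alg}}\,c = 0$, so the $\theta$-component is indeed annihilated and $v \in \im(\phi)\cap\ker(\theta^*)$, confirming $v \in A_\theta$. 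Given this, the corollary follows purely by the definition of $\psi_\theta$ and the uniqueness of the projection, with no further obstacle.
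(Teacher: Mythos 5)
Your proof is correct and follows essentially the same route as the paper's: both exhibit $f - \bigl(\phi(f) + \tilde{c}\,\phi(t^\prime)\bigr)$ as a derivative in $K(t)$ via Remark \ref{RE:lc}, verify that $\phi(f) + \tilde{c}\,\phi(t^\prime)$ lies in $A_\theta$, and conclude by the projection property of $\psi_\theta$ (you phrase this as uniqueness of the direct-sum decomposition, the paper as $\ker(\psi_\theta) = K(t)^\prime$ together with $\psi_\theta$ being the identity on $A_\theta$). Your tracing of Algorithms \ref{ALG:ar} and \ref{ALG:proj} is just a concrete instantiation of the paper's congruence argument, and your explicit check that $\theta^*\bigl(\phi(f) - \tilde{c}_{\mathrm{alg}}\phi(t^\prime)\bigr) = 0$ fills in the one step the paper asserts without detail.
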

\begin{proof} Since $f \in K$, we have $f \equiv \phi(f)$ mod $K^\prime$.
By Remark \ref{RE:lc},  $f \equiv \phi(f)+ \tilde{c} \phi(t^\prime)$ mod $K(t)^\prime$.
Note that $\phi(f) + \tilde{c} \phi(t^\prime)$ belongs to the $\theta$-complement.
Applying $\psi_\theta$ to the above congruence, we conclude that $\psi_\theta(f)=\phi(f)+ \tilde{c} \phi(t^\prime),$
because $K(t)^\prime = \ker(\psi_\theta)$ and the restriction of $\psi_\theta$ to $A_\theta$ is the identity map.   
\end{proof}

\section{Complete reduction} \label{SECT:tower}

In this section, we define primitive towers and remove the assumptions made in 
the first paragraph of Section \ref{SECT:basic}. 
\begin{define} \label{DEF:tower}
Let $K_0$ be a differential field whose subfield of constants is denoted by $C$.
A {\em primitive tower} over $K_0$ is
\begin{equation} \label{EQ:tower}
 \begin{array}{ccccccc}
    K_0       & \subset &  K_1      & \subset & \cdots         & \subset     & K_n  \\
              &         & \shortparallel &         &            &             & \shortparallel \\
              &         &  K_0(t_1) &         &                   &  & K_{n-1}(t_n),
\end{array}
\end{equation}
where $t_i$ is a primitive monomial over $K_{i-1}$ for all $i \in [n]$.
\end{define}
Note that $C$ is the subfield of constants in a primitive tower $K_n$. 
\begin{thm} \label{TH:tower}
Let $K_n$  be a primitive tower as in \eqref{EQ:tower}, and $\Theta_0$ be an effective $C$-basis of $K_0$. 
Assume that $\phi_0:K_0 \rightarrow K_0$ is a complete reduction for $K_0^\prime$,
and that there is an algorithm to compute an $R$-pair of every element in $K_0$. 
Then, for every $ i \in [n]_0$, $K_i$ has an effective $C$-basis $\Theta_i$ and a complete reduction $\phi_i: K_i \rightarrow K_i$ for $K_i^\prime$. Moreover, 
there is an algorithm to compute an $R$-pair of every element in $K_i$.
\end{thm}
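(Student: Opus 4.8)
The plan is to argue by induction on $i$, treating the whole of Section \ref{SECT:basic} as the engine that drives a single inductive step. The base case $i = 0$ is exactly the hypothesis: $\Theta_0$ is an effective $C$-basis of $K_0$, $\phi_0$ is a complete reduction for $K_0'$, and an algorithm computing R-pairs in $K_0$ is assumed available. The point of the induction is that the three conclusions for $K_{i-1}$ are precisely the standing assumptions imposed at the start of Section \ref{SECT:basic} (with $K$ there instantiated as $K_{i-1}$), together with the extra hypothesis, used from Lemma \ref{LM:direct} onward, that $K_{i-1}$ carries an effective $C$-basis.

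For the inductive step I would suppose the statement holds for $K_{i-1}$ and establish it for $K_i = K_{i-1}(t_i)$. Since $t_i$ is a primitive monomial over $K_{i-1}$ by Definition \ref{DEF:tower}, I set $K = K_{i-1}$ and $t = t_i$ and run the constructions of Section \ref{SECT:basic} verbatim. Concretely: first compute the first pair $(\lambda_{t_i}, \phi_{i-1}(t_i'))$ associated to $K_{i-1}(t_i)$, which is legitimate because $t_i' \in K_{i-1}$ (as $t_i$ is a monomial) and the R-pair algorithm at level $i-1$ is available by induction; by Remark \ref{RE:lc} the remainder $\phi_{i-1}(t_i')$ is nonzero. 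Next, using the effectiveness of $\Theta_{i-1}$, locate some $\theta \in \Theta_{i-1}$ effective for $\phi_{i-1}(t_i')$ and compute $c = \theta^*(\phi_{i-1}(t_i'))$, giving the second pair $(\theta, c)$. Theorem \ref{TH:decomp} then yields the complete reduction $\psi_\theta$ on $K_i$ for $K_i'$, and I set $\phi_i := \psi_\theta$; Algorithm \ref{ALG:cr} supplies the required R-pair algorithm for every element of $K_i$. Finally, the field $K_i = K_{i-1}(t_i)$ carries the effective $K_{i-1}$-basis $\Theta$ of \eqref{EQ:basis0} (with $y = t_i$), and Remark \ref{RE:basis} combines $\Theta_{i-1}$ with $\Theta$ into an effective $C$-basis $\Theta_i = \{\theta_0 \theta \mid \theta_0 \in \Theta_{i-1},\ \theta \in \Theta\}$ of $K_i$.

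There is no single hard obstacle here: the mathematical substance has already been absorbed into Section \ref{SECT:basic}, so the proof is essentially bookkeeping that verifies each hypothesis of that section is genuinely inherited at every level. The one point deserving care is the constant field. The constructions of Section \ref{SECT:basic} are carried out over a fixed constant subfield $C$, so I must confirm that $C$ remains the field of constants of $K_i$ for all $i$; this holds because a primitive monomial extension introduces no new constants by \cite[Theorem 5.1.1]{BronsteinBook}, which guarantees both that the same $C$ works uniformly up the tower and that the second pair $(\theta, c)$ lives over the correct constant field. With that observation the induction closes.
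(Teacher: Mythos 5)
Your proof is correct and takes essentially the same route as the paper's: induction on the tower height, with Theorem \ref{TH:decomp} (instantiated at $K = K_{i-1}$, $t = t_i$) supplying the complete reduction, Remark \ref{RE:basis} supplying the effective $C$-basis, and Algorithms \ref{ALG:ar}--\ref{ALG:cr} supplying the R-pair algorithm. Your extra care about the constant field is sound but already covered by the paper in the remark immediately following Definition \ref{DEF:tower}, which invokes the same fact that primitive monomial extensions introduce no new constants.
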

\begin{proof} We proceed by induction on $n$. If $n=0$, then the conclusion clearly holds.
 Assume that $n>0$ and that 
there exists an effective $C$-basis $\Theta_{n-1}$ of $K_{n-1}$, a complete reduction $\phi_{n-1}$  for $K_{n-1}^\prime$ on $K_{n-1}$
and an algorithm to compute an R-pair of every element in $K_{n-1}$. 
The first and second pairs $\left(\lambda_n, \phi_{n-1}(t_n^\prime) \right)$ and $(\theta_n, c_n)$ 
associated to $K_{n}$ can be constructed by $\phi_{n-1}$ and $\Theta_{n-1}$, respectively.

The tower $K_n$ has an effective $C$-basis $\Theta_n$ by Remark \ref{RE:basis}.
Replacing $K$ with $K_{n-1}$, $t$ with $t_n$, $\phi$ with $\phi_{n-1}$, and $\theta$ with $\theta_n$
in Theorem \ref{TH:decomp},  we find a complete reduction
$\psi_{\theta_n}$ for $K_{n}^\prime$ on $K_n$.  
Doing the same replacements in
Algorithms \ref{ALG:ar}, \ref{ALG:basis}, \ref{ALG:proj} and \ref{ALG:cr}, we have
an algorithm to compute an R-pair of every element in $K_n$ with respect to $\psi_{\theta_n}$. 
The induction is completed by setting $\phi_n=\psi_{\theta_n}$.
\end{proof}
\begin{cor} \label{COR:tower}
Let $K_n$  be a primitive tower as in \eqref{EQ:tower} and $\phi_i$ be the complete reduction constructed 
in the proof of the above theorem. Then, for every $i \in [n-1]_0$ and $f \in K_i$,
$\phi_n(f)-\phi_i(f)$ is a $C$-linear combination of $\phi_i(t_{i+1}^\prime),$ \ldots, $\phi_{n-1}(t_n^\prime)$, and belongs to $K_{n}^\prime$.
\end{cor}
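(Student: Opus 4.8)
The plan is to prove the statement by downward induction on $i$, starting from $i=n-1$ and descending to $i=0$, using Corollary \ref{COR:cr} as the single-step engine. The base case $i = n-1$ is exactly Corollary \ref{COR:cr}: applying that result with $K = K_{n-1}$, $t = t_n$, $\phi = \phi_{n-1}$, $\theta = \theta_n$ and $\psi_\theta = \phi_n$, we get for every $f \in K_{n-1}$ that $\phi_n(f) = \phi_{n-1}(f) + \tilde c\, \phi_{n-1}(t_n^\prime)$ with $\tilde c = -\theta_n^*(\phi_{n-1}(f)) c_n^{-1} \in C$. Hence $\phi_n(f) - \phi_{n-1}(f) = \tilde c\, \phi_{n-1}(t_n^\prime)$, which is a $C$-multiple of $\phi_{n-1}(t_n^\prime)$, and it lies in $K_n^\prime$ by Remark \ref{RE:lc} (since $\phi_{n-1}(t_n^\prime) = (t_n - \lambda_n)^\prime$).

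For the inductive step I would fix $i < n-1$ and $f \in K_i$, and telescope through the intermediate fields. The key observation is that $f \in K_i \subseteq K_{i+1} \subseteq \cdots \subseteq K_{n-1}$, so Corollary \ref{COR:cr} may be applied at \emph{each} level $j$ from $i$ up to $n-1$. Concretely, for each $j \in \{i, i+1, \ldots, n-1\}$, writing $g = \phi_j(f) \in K_j$ (which indeed lies in $K_j$, the field to which the level-$(j{+}1)$ reduction applies), Corollary \ref{COR:cr} gives
\begin{equation} \label{EQ:levelstep}
\phi_{j+1}(\phi_j(f)) = \phi_j(\phi_j(f)) + c_j^{(f)} \phi_j(t_{j+1}^\prime)
\end{equation}
for some constant $c_j^{(f)} \in C$. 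Since each $\phi_j$ is idempotent, $\phi_j(\phi_j(f)) = \phi_j(f)$, and moreover $\phi_{j+1}(\phi_j(f)) = \phi_{j+1}(f)$ because $\phi_j(f) \equiv f \bmod K_j^\prime \subseteq K_{j+1}^\prime = \ker(\phi_{j+1})$. Thus \eqref{EQ:levelstep} simplifies to $\phi_{j+1}(f) - \phi_j(f) = c_j^{(f)} \phi_j(t_{j+1}^\prime)$. Summing these telescoping differences over $j = i, \ldots, n-1$ yields
\[
\phi_n(f) - \phi_i(f) = \sum_{j=i}^{n-1} c_j^{(f)}\, \phi_j(t_{j+1}^\prime),
\]
which is precisely a $C$-linear combination of $\phi_i(t_{i+1}^\prime), \ldots, \phi_{n-1}(t_n^\prime)$. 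Each summand $\phi_j(t_{j+1}^\prime) = (t_{j+1} - \lambda_{j+1})^\prime$ lies in $K_{j+1}^\prime \subseteq K_n^\prime$ by Remark \ref{RE:lc}, so the whole sum belongs to $K_n^\prime$, completing the proof.

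The main subtlety I expect is justifying the two collapses in the step from \eqref{EQ:levelstep}, namely that $\phi_{j+1}(\phi_j(f)) = \phi_{j+1}(f)$ and $\phi_j(\phi_j(f)) = \phi_j(f)$. The second is just idempotence of the complete reduction $\phi_j$. The first requires the tower compatibility $K_j^\prime \subseteq K_{j+1}^\prime$: since $\phi_j(f) - f \in K_j^\prime$ and $K_j \subseteq K_{j+1}$ forces $K_j^\prime \subseteq K_{j+1}^\prime = \ker(\phi_{j+1})$, both elements have the same image under $\phi_{j+1}$. One must also check that Corollary \ref{COR:cr} genuinely applies at level $j$ with the element $\phi_j(f)$ as input, which is legitimate because $\phi_j(f) \in K_j$ and $\phi_{j+1}$ is the complete reduction built from $\phi_j$ exactly as in Theorem \ref{TH:decomp}. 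No nontrivial computation is needed beyond these bookkeeping observations; the combinatorial content is entirely the telescoping sum.
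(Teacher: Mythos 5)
Your proof is correct and follows essentially the same route as the paper: apply Corollary \ref{COR:cr} at each level $j$ from $i$ to $n-1$, sum the telescoping differences, and invoke Remark \ref{RE:lc} for membership in $K_n^\prime$. The only difference is a harmless detour: since $f \in K_i \subseteq K_j$, you could apply Corollary \ref{COR:cr} directly to $f$ at level $j$ (as the paper does), which makes your two collapse arguments for $\phi_{j+1}(\phi_j(f))$ and $\phi_j(\phi_j(f))$ unnecessary.
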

\begin{proof} For every $j \in [n-1]_0$, $\phi_{j+1}(f)-\phi_j(f)=c_j \phi_{j}(t_{j+1}^\prime)$
for some $c_j \in C$ by Corollary \ref{COR:cr}. Summing up these equalities from $i$ to $n-1$,
we see that $\phi_n(f)-\phi_i(f)$ is a $C$-linear combination of $\phi_i(t_{i+1}^\prime),$ \ldots, $\phi_{n-1}(t_n^\prime)$.
It belongs to $K_{n}^\prime$ by Remark \ref{RE:lc}.
\end{proof}


To perform complete reductions in practice,  we assume further that $[K_0: C(x)] < \infty$
and that $K_0$ contains no new constant.
Complete reductions on $C(x)$ and its finite algebraic extensions are
given in Example \ref{EX:rational} and \cite{CKK2016}, respectively.
Improvements on the reduction for algebraic functions can be found in \cite{CDK2021}. 
Algorithms \ref{ALG:be} and \ref{ALG:cf} show that  $C(x)$ has an effective $C$-basis. 
So does $K_0$ by Remark \ref{RE:basis}.  
 Consequently,  a complete reduction  for $K_n^\prime$ on $K_n$ is available by Theorem \ref{TH:tower}.

%

Let us make a notational convention so that we can illustrate computations and proofs through a primitive tower concisely. 
\begin{convention} \label{CON:tower}
Let $K_n$  be a primitive tower as in \eqref{EQ:tower}, and $\phi_0$ be a complete reduction for $K_0^\prime$ on $K_0$.  
Let $\Theta$ be the effective $C$-basis of $K_n$ obtained from a repeated use of Remark \ref{RE:basis}. 
For all $i \in [n]$,
\begin{itemize}
\item 
$\phi_i: K_i \rightarrow K_i$ stands for the complete reduction
for $K_i^\prime$  in the proof of Theorem \ref{TH:tower},  
\item $\left(\lambda_i, \phi_{i-1}(t_i^\prime) \right)$ and $\left(\theta_i, c_i\right)$ for the first and second pairs associated to $K_i$, respectively,
\item $S_i $ for the set of simple elements in $K_i$ with respect to $t_i$, and 
\item $A_i$ for the auxiliary subspace in $K_{i-1}[t_i]$. 
\end{itemize}
\end{convention}
All associated pairs are constructed once and for all. So the possible ambiguity mentioned in Remark \ref{RE:factor} will never occur.
\begin{example} \label{EX:rattower}
Let $K_0 {=} C(x)$,
$t_1 {=} \log(1-x),$ and 
$t_2 {=} \polylog(2,x)$, which is equal to  $- \int \frac{\log(1-x)}{x}.$
Then $K_2=K_0(t_1,t_2)$ is a primitive tower. 
We associate  
$(\lambda_1, \phi_0(t_1^\prime)) = \left(0, \, \frac{1}{x-1} \right),$ $(\theta_1, c_1) = \left(\frac{1}{x-1}, \, 1 \right)$
and $(\lambda_2, \phi_1(t_2^\prime)) = \left(0, \,  -\frac{t_1}{x} \right)$, $(\theta_2, c_2) = \left(\frac{t_1}{x}, \,  -1 \right)$
to $K_1$ and $K_2$, respectively. 
Let us compute respective $R$-pairs of
$$f= \frac { \left(  \left( x-1 \right)^{2}t_1+x \right) t_2^3 +x \left( x-1 \right)t_1}{x^2 \left( x-1 \right) 
t_2^2} \quad \text{and} \quad \tilde{f}=t_2^2.$$

First,   {\sc HermiteReduce}$(f)$ finds 
$(g, p, s)  \in K_1(t_2) \times K_1[t_2] \times S_2$
such that \eqref{EQ:hr} holds, where 
$g= \frac{1}{t_2},$ $p = \frac{(x-1)^2 t_1+x}{x^2(x-1)} t_2$ and $s = 0.$

Second, {\sc AuxiliaryReduction}$(p)$ yields 
$(q, r) \in K_1[t_2] \times A_2$
such that \eqref{EQ:ar} holds, where 
$q= \frac{t_1}{x} t_2 + \frac{x-1}{x} t_1^2$ and $r = \frac{t_1}{x} t_2 - \frac{2t_1}{x}.$


At last, we project $r$ to $K_1[t_2]^\prime$ and the $\theta_2$-complement by {\sc Projection}. The respective projections are $u^\prime$ and
$0,$ where $u=-\frac{t_2^2}{2} +2 t_2$. So $f$ has an R-pair $(g+q+u, 0)$. Consequently,  $ \int f= g+q+u$.

In the same vein, an R-pair of $\tilde{f}$ is $(\tilde g, \tilde r)$, where
\[ \tilde g = {x t_2^{2} + \left( 2t_1x-2t_1-2x \right)t_2}+
2t_1^{2}x-2t_1^{2}-6 t_1 x+6 t_1+6x\]
and $\tilde r = -\frac{2 t_1^2}{x}.$
So $\tilde{f}$ does not have any integral in $K_2$. The remainder $ \tilde r$ is \lq\lq simpler\rq\rq\
than $\tilde{f}$ in the sense that $\tilde r$ is of degree $0$ in $t_2$. 
\end{example}
\begin{example}\label{EX:alg}
Let $K_0 = C(x,y)$ with $y^3 - xy + 1=0$. Set $t_1 = \log(y)$. 
Then $K_1=K_0(t_1)$ is a primitive tower.
Two  associated  pairs of $K_1$ are  $(\lambda_1, \phi_{0}(t_1^\prime)) = \left(\frac{2xy}{3}, \,-y \right)$
and $(\theta_1, c_1) = \left(y, \, -1 \right)$, respectively. 
We compute an R-pair of $f=y(2-3t_1).$

{\sc HermiteReduce}$(f)$ finds a triplet
$(g, p, s)$ in $K_0(t_1) \times K_0[t_1] \times S_1$
such that \eqref{EQ:hr} holds, where
$g = 0,$ $p = -3yt_1+ 2y$ and $s = 0.$

Since $\phi_{0}(t_1^\prime) = -y$, we see that $y \in \im(\phi_0)$. Then $p \in A_1$.
So \eqref{EQ:ar} holds by setting $q=0$ and $r=p$. 

{\sc Projection}$(r, \lambda_1, \phi_{0}(t_1^\prime), \theta_1, c_1)$   yields
$u=  \frac{3}{2} t_1^2- (2xy)t_1+ 2xy$ and $v=0$
such that \eqref{EQ:proj} holds.
Thus, an R-pair of $f$ is $(u, 0)$.  Consequently, $u$ is an integral of $f$.
\end{example}

%
%

We have compared our preliminary implementation of the complete reduction given in Theorem \ref{TH:tower}
with the {\sc Maple} function {\tt int} and Algorithm {\sc AddDecompInField} in~\cite[page 150]{DGLW2020} for in-field integration. Empirical results are given in the appendix.


\section{Applications of remainders} \label{SECT:app}

This section contains
two applications: computing  elementary integrals over~$K_n$ with $K_0=C(x)$,
and constructing telescopers for some non-D-finite functions. Convention \ref{CON:tower} is
kept in the  sequel. 
\subsection{Elementary integrals} \label{SUBSECT:elem}

Let $f \in K_n$. Then $f$ has an elementary integral over $K_n$ if and only if its remainder 
$\phi_n(f)$ has one. Two properties of remainders allow us to apply Algorithm \ref{ALG:cs} directly
to compute elementary integrals. To describe the properties, we need three $C$-subspaces of $K_n$.
 Let
$$ P = \sum_{i \in [n]} t_i K_{i-1}[t_i], \quad S= \sum_{i \in [n]} S_i,$$
and $T$ be the  $C$-subspace spanned by 
$\phi_0(t_1^\prime),$ $\phi_0(t_2^\prime),$ \ldots, $\phi_{n-1}(t_n^\prime).$ 
Note that 
$\sum_{i \in [n]} t_i K_{i-1}[t_i]$,  $\sum_{i \in [n]} S_i$ 
and $K_0+P+S$ are all  direct. 
\begin{prop} \label{PROP:rem1}
$ \im(\phi_n) \subset K_0 \oplus P \oplus S.$
\end{prop}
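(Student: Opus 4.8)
The plan is to induct on the height $n$ of the tower and to exploit the explicit description of the image of the complete reduction supplied by Theorem~\ref{TH:decomp}. For $n = 0$ the claim reads $\im(\phi_0) \subseteq K_0$, which is immediate since $\phi_0$ maps $K_0$ into itself while both $P$ and $S$ are empty sums. For the inductive step I would first record that, by Theorem~\ref{TH:decomp} applied with $K = K_{n-1}$, $t = t_n$ and $\theta = \theta_n$, the reduction $\phi_n = \psi_{\theta_n}$ has image exactly $A_{\theta_n} \oplus S_n$, where the $\theta_n$-complement is $A_{\theta_n} = \left(\im(\phi_{n-1}) \cap \ker(\theta_n^*)\right) \otimes C[t_n]$ by Lemma~\ref{LM:direct}. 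Thus it suffices to locate $A_{\theta_n}$ and $S_n$ inside $K_0 \oplus P \oplus S$, and $S_n$ is by definition one of the summands of $S$.

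The key step is to separate the degree-zero part of an element of $A_{\theta_n}$ from its higher-degree part. Writing a typical element as $\sum_{i \in \bN} w_i t_n^i$ with each $w_i \in \im(\phi_{n-1}) \cap \ker(\theta_n^*) \subseteq \im(\phi_{n-1}) \subseteq K_{n-1}$, I would split it as $w_0 + \sum_{i \ge 1} w_i t_n^i$. The constant term $w_0$ lies in $\im(\phi_{n-1})$, while the tail $\sum_{i \ge 1} w_i t_n^i$ factors as $t_n \sum_{i \ge 1} w_i t_n^{i-1} \in t_n K_{n-1}[t_n]$, which is precisely the $n$-th summand of $P$. Hence $A_{\theta_n} \subseteq \im(\phi_{n-1}) + t_n K_{n-1}[t_n]$.

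It then remains to feed in the induction hypothesis $\im(\phi_{n-1}) \subseteq K_0 \oplus \bigl(\sum_{i \in [n-1]} t_i K_{i-1}[t_i]\bigr) \oplus \bigl(\sum_{i \in [n-1]} S_i\bigr)$ and to combine everything:
\[
\im(\phi_n) = A_{\theta_n} \oplus S_n \subseteq \im(\phi_{n-1}) + t_n K_{n-1}[t_n] + S_n \subseteq K_0 \oplus P \oplus S,
\]
the last inclusion using that $t_n K_{n-1}[t_n]$ and $S_n$ are exactly the missing $n$-th summands of $P$ and $S$. I do not anticipate a genuine obstacle here: the argument is essentially bookkeeping built on the degree split above. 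The one point demanding care is the directness already asserted in the statement, namely that $K_0 + P + S$ is direct and that each of $P$ and $S$ is an internal direct sum over $i$; I would invoke this to guarantee that the various containments assemble without overlap, so that the final sum may legitimately be written with $\oplus$ rather than $+$.
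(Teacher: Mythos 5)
Your proposal is correct and follows essentially the same route as the paper: induction on $n$, using Theorem~\ref{TH:decomp} to place $\im(\phi_n)$ inside (the $\theta_n$-complement)$\,\oplus\, S_n$, splitting off the degree-zero coefficient to get the containment in $\im(\phi_{n-1}) + t_n K_{n-1}[t_n] + S_n$, and then applying the induction hypothesis. The only cosmetic difference is that you work with $A_{\theta_n}$ exactly, while the paper relaxes to the larger auxiliary subspace $A_n = \im(\phi_{n-1}) \otimes C[t_n]$ before performing the same split.
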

\begin{proof} The conclusion holds for $n=0$ because $\im(\phi_0) \subset K_0$.
Assume that $n>0$ and that the conclusion holds for $n-1$. By
Theorem \ref{TH:decomp}, $\im(\phi_n) \subset A_n + S_n$.
Since $A_n \subset \im(\phi_{n-1}) + t_n K_{n-1}[t_n]$,
we see that  
$\im(\phi_n) \subset \im(\phi_{n-1}) + t_n K_{n-1}[t_n] + S_n.$ 
The proposition then follows from the induction hypothesis. 
\end{proof}
\begin{prop}\label{PROP:rem2}
If $h \in K_0 \oplus S$, then
$h - \phi_n(h) \in K_0^\prime  + T.$
\end{prop}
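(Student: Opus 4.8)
The plan is to exploit the $C$-linearity of $\phi_n$ together with the direct-sum structure of $K_0 \oplus S$. First I would write $h = h_0 + \sum_{i=1}^n s_i$ with $h_0 \in K_0$ and $s_i \in S_i$, which is legitimate since the sum $K_0 + \sum_{i \in [n]} S_i$ is direct. Because $\phi_n$ is $C$-linear, it suffices to verify $g - \phi_n(g) \in K_0^\prime + T$ separately for $g = h_0$ and for each $g = s_i$, and then add the finitely many results inside the $C$-subspace $K_0^\prime + T$.

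For the base piece $h_0 \in K_0$, I would split
$h_0 - \phi_n(h_0) = \left(h_0 - \phi_0(h_0)\right) + \left(\phi_0(h_0) - \phi_n(h_0)\right)$.
The first summand lies in $K_0^\prime$ because $\phi_0$ is a complete reduction for $K_0^\prime$. The second is handled by Corollary \ref{COR:tower} with $i = 0$: the difference $\phi_n(h_0) - \phi_0(h_0)$ is a $C$-linear combination of $\phi_0(t_1^\prime), \ldots, \phi_{n-1}(t_n^\prime)$, which are exactly the spanning vectors of $T$. Hence $h_0 - \phi_n(h_0) \in K_0^\prime + T$.

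The key observation for a simple piece $s_i \in S_i$ is that $\phi_i$ fixes it. Reading Theorem \ref{TH:decomp} at level $i$ shows $S_i \subseteq \im(\phi_i)$, and since $\phi_i$ is a complete reduction it is idempotent, so it acts as the identity on its image; thus $\phi_i(s_i) = s_i$ and $s_i - \phi_i(s_i) = 0$. Applying Corollary \ref{COR:tower} to $s_i \in K_i$ then gives that $\phi_n(s_i) - \phi_i(s_i)$ is a $C$-linear combination of $\phi_i(t_{i+1}^\prime), \ldots, \phi_{n-1}(t_n^\prime)$, all of which lie in $T$. Therefore
$s_i - \phi_n(s_i) = -\left(\phi_n(s_i) - \phi_i(s_i)\right) \in T \subseteq K_0^\prime + T$.
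Summing the base piece with the simple pieces yields $h - \phi_n(h) \in K_0^\prime + T$, as claimed.

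I expect the only genuinely non-routine point to be recognizing that each simple part contributes nothing to $K_0^\prime$ and only a $T$-term: precisely because $s_i$ already lies in $\im(\phi_i)$, the ``within-level'' correction $s_i - \phi_i(s_i)$ vanishes and only the ``across-level'' corrections furnished by Corollary \ref{COR:tower} survive. Once that is seen, everything else is bookkeeping with the direct sum and the linearity of $\phi_n$.
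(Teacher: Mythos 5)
Your proof is correct and follows essentially the same route as the paper's: decompose $h = h_0 + \sum_{i\in[n]} s_i$, use Theorem \ref{TH:decomp} to get $\phi_i(s_i)=s_i$, invoke Corollary \ref{COR:tower} to pass from $\phi_i$ (resp.\ $\phi_0$) to $\phi_n$ modulo $T$, and absorb $h_0-\phi_0(h_0)$ into $K_0^\prime$. The paper merely phrases the bookkeeping as congruences modulo $T$ rather than splitting into pieces by linearity, which is a cosmetic difference.
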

\begin{proof} Assume $h = h_0 + \sum_{i \in [n]} s_i$, where $h_0 \in K_0$ and $s_i \in  S_i$.
Then $s_i = \phi_i(s_i)$ by Theorem \ref{TH:decomp}, and $\phi_i(s_i) \equiv \phi_n(s_i)$ $\text{mod}~T$ 
by Corollary \ref{COR:tower}. Hence, 
$ s_i \equiv \phi_n(s_i)$ $\text{mod}~T$, which, together with the application of $\phi_n$ to $h$, implies   
$h - \phi_n(h) \equiv h_0 - \phi_n(h_0)$ $\text{mod}~T.$
By Corollary \ref{COR:tower} again, 
$h - \phi_n(h) \equiv h_0 - \phi_0(h_0)$ $\text{mod}~T.$
The proposition is proved by noting that $h_0 - \phi_0(h_0) \in K_0^\prime$. 
\end{proof}

An element $s$ of $S$ can be uniquely written as $\sum_{i \in [n]} s_i$,
where $s_i \in S_i$. We say that all residues of $s$ are constants if all residues of $s_i$ as an element in $K_{i-1}(t_i)$ belong to $\overline{C}$ 
for every $i \in [n]$.

\begin{thm}\label{TH:elem}
Let $K_n$ be  a primitive tower  as in \eqref{EQ:tower} with $K_0=C(x)$. 
Assume that $C$ is algebraically closed.
Then $f \in K_n$ has an elementary integral over $K_n$ if and only if 
\begin{itemize}
  \item [(i)] there exists $s\in S$ such that
  $\phi_n(f)\equiv s \mod K_0 + T,$ and
  \item [(ii)] all residues of $s$ belong to $C$.
\end{itemize}
\end{thm}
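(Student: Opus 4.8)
The plan is to reduce the elementary-integrability question for an arbitrary $f \in K_n$ to a statement purely about its remainder $\phi_n(f)$, and then to the residue criterion of \cite[Theorem 3.9]{RaabThesis} applied level by level. The first observation is that $f$ and $\phi_n(f)$ differ by a derivative in $K_n$, so $f$ has an elementary integral over $K_n$ if and only if $\phi_n(f)$ does. Thus I may replace $f$ by its remainder and assume from the outset that $f = \phi_n(f) \in \im(\phi_n)$. By Proposition \ref{PROP:rem1}, such an $f$ lies in $K_0 \oplus P \oplus S$, so I can write $f = f_0 + f_P + f_S$ with $f_0 \in K_0$, $f_P \in P$ and $f_S \in S$, the sum being direct.

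The forward direction (necessity) is where the residue criterion enters. Suppose $f$ has an elementary integral over $K_n$. The strategy is to invoke Liouville's theorem so that $\int f = v + \sum_j c_j \log(w_j)$ with $v$ in $K_n$ and $w_j$ in an algebraic extension, and then to read off the residues. The polynomial-in-$t_n$ part $f_P$ and the $K_0$-part must be absorbed into $v$ plus the logarithmic terms coming from lower levels and from the $t_i$ themselves, while the simple part $f_S = \sum_i s_i$ carries the genuine residues at the normal factors. Concretely, applying the level-$n$ analysis of elementary integrability in a primitive monomial extension (integrating the simple part $s_n$ forces, by the residue criterion, all residues of $s_n$ to be constants, and the logarithmic-derivative conditions push the remaining obstruction down into $K_{n-1}$), and then descending by induction on $n$, I obtain condition (ii): all residues of $s := f_S$ lie in $\overline{C} = C$, using that $C$ is algebraically closed. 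The residual non-simple part that survives the integration, after subtracting off the derivative $v'$ and the logarithmic derivatives, must be expressible modulo $K_0 + T$; this gives condition (i), namely $\phi_n(f) \equiv s \bmod (K_0 + T)$. Proposition \ref{PROP:rem2} is the engine here: it guarantees that the difference between an element of $K_0 \oplus S$ and its $\phi_n$-remainder lands in $K_0' + T$, so the congruence modulo $K_0 + T$ is stable under passing to remainders.

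For the converse (sufficiency), I assume (i) and (ii) and construct an elementary integral explicitly. Condition (i) lets me write $\phi_n(f) = s + k + \tau + \partial$ where $k \in K_0$, $\tau \in T$, and $\partial \in K_n^\prime$ is a genuine derivative that is integrated for free. The term $\tau$ is a $C$-linear combination of the $\phi_{i-1}(t_i^\prime)$, each of which equals $(t_i - \lambda_i)^\prime$ by Remark \ref{RE:lc}, hence is already a derivative in $K_n$ and contributes nothing to elementariness. The $K_0$-part $k$ is a rational function of $x$ and has an elementary integral by the classical Hermite--Ostrogradsky and Rothstein--Trager machinery (Example \ref{EX:rational}). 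Finally, condition (ii) together with the residue description of $S$ means that each simple piece $s_i$ has only constant residues; by the residue criterion the integral of $s_i$ is $v_i + \sum_j \beta_{i,j} \log(\text{factor})$ with $\beta_{i,j} \in C$, which is elementary over $K_n$. Summing these elementary integrals yields an elementary integral of $\phi_n(f)$, hence of $f$.

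I expect the main obstacle to be the necessity direction, specifically the bookkeeping in the inductive descent: separating, at each level $i$, the contribution that Liouville's theorem assigns to the logarithm $t_i$ (which is why the $T$-component appears and why condition (i) is only a congruence modulo $K_0 + T$ rather than an equality) from the contribution forced down into $K_{i-1}$. Making precise that the only new logarithmic degrees of freedom introduced at level $i$ are captured by $\phi_{i-1}(t_i^\prime) \in T$, and that everything else is governed by the residues of the simple parts, is the delicate point; Propositions \ref{PROP:rem1} and \ref{PROP:rem2} are precisely the tools that let this descent close cleanly, and the hypothesis that $C$ is algebraically closed is what upgrades ``residues in $\overline{C}$'' to ``residues in $C$.''
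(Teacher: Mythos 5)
Your sufficiency argument is correct and is essentially the paper's: by (ii) the element $s$ has an elementary integral, elements of $K_0 = C(x)$ always do, and $T \subset K_n^\prime$ by Remark \ref{RE:lc}, so condition (i) transfers elementary integrability to $\phi_n(f)$ and hence to $f$.

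The necessity direction, however, contains a genuine error: you take the witness $s$ to be the canonical simple component $f_S$ of $\phi_n(f)$ in $K_0 \oplus P \oplus S$ and assert that elementary integrability forces all residues of $f_S$ to be constants. The paper's Example \ref{EX:elem} refutes exactly this assertion: there $\phi_3(f) = \frac{x}{(x-1)t_1} = x\,\frac{t_1^\prime}{t_1}$ lies in $S$, its residue is $x \notin C$, and yet $\int f = t_2t_3 + t_3 + \log(t_1/x)$ is elementary over $K_3$. The theorem only holds because of the existential quantifier in (i): the witness must be allowed to differ from $f_S$ by $S$-components of elements of $T$ (modulo $K_0$); in that example the correct choice is $s = f_S - \frac{1}{t_1} = \frac{t_1^\prime}{t_1}$, where $\frac{1}{t_1}$ is the $S$-component of $\phi_2(t_3^\prime) \in T$. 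The error originates in your level-by-level descent: elementary integrability over $K_n$ of an element of $K_{n-1}$ implies nothing over $K_{n-1}$, because the primitive $t_n$ need not be elementary and it integrates $\phi_{n-1}(t_n^\prime)$ for free --- indeed $\phi_{n-1}(t_n^\prime) = (t_n - \lambda_n)^\prime$ has the elementary integral $t_n - \lambda_n$ over $K_n$, while $t_n^\prime \notin K_{n-1}^\prime$ and (as with polylogarithms) it may have no elementary integral over $K_{n-1}$ at all. This freedom is precisely what is recorded by the congruence modulo $K_0 + T$ in (i), and it changes which residues are forced to be constant, so it cannot be discarded. The paper's proof avoids any descent: it invokes the strong Liouville theorem \cite[Theorem 5.5.2]{BronsteinBook} once over $K_n$ (this is where $C = \overline{C}$ is used) to write $f \equiv h \bmod K_n^\prime$ with $h$ a $C$-linear combination of logarithmic derivatives of elements of $K_n$; the logarithmic-derivative identity then gives $h \equiv s \bmod K_0$ with $s \in S$ having constant residues; and Proposition \ref{PROP:rem2}, applied to the Liouville representative $h \in K_0 \oplus S$ (not to $f$), yields $\phi_n(f) = \phi_n(h) \equiv h \equiv s \bmod K_0 + T$. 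You cite Proposition \ref{PROP:rem2} as ``the engine,'' but you never apply it to $h$; that single application is what produces conditions (i) and (ii) simultaneously for one and the same $s$.
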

\begin{proof} Assume that both (i) and (ii) hold.  By (ii) and \cite[Proposition 3.3]{DGGL2023},
$s$ has an elementary integral over $K_n$.
Every element of $K_0$ has an elementary integral over $K_0$ because $K_0=C(x)$.
By Remark \ref{RE:lc}, $T \subset K_n^\prime$. It follows from (i) that  
$\phi_n(f)$ has an elementary integral over $K_n$, and so does $f$.

Conversely, assume that $f$ has an elementary integral over $K_n$.
Then there exists a $C$-linear combination $h$ of logarithmic derivatives in $K_n$ such that 
  $f \equiv h \mod K_n^\prime$  by \cite[Theorem 5.5.2]{BronsteinBook}.
  Since $\phi_n(f)=\phi_n(h)$, it suffices to show that $\phi_n(h)$ satisfies both (i) and (ii).
  By the logarithmic derivative identity, $h \equiv s$ $\text{mod}$ $K_0$
  for some $s \in S$, which has merely constant residues. 
  Then $h \equiv \phi_n(h)$ $\text{mod}$ $K_0 + T$ by Proposition \ref{PROP:rem2}.
  Hence, $\phi_n(h) \equiv s$ $\text{mod}~K_0 + T$ by the above two congruences.
  Both (i) and (ii) hold. 
  \end{proof}

Next, we outline an algorithm for computing elementary integrals over $K_n$.
Let $f \in K_n$.
\begin{enumerate}
\item[1.] Compute an R-pair $(g, \phi_n(f))$. 
If $\phi_n(f)=0$, then $\int f = g$ and we are done.
\item[2.] Assume that $\phi_n(f) \neq 0$. By Proposition \ref{PROP:rem1}, we can write
$\phi_n(f) = r + p + s$ and $\phi_{i-1}(t_i^\prime) = r_i + p_i + s_i$,
where $i \in [n]$, $r, r_i \in K_0,$  $p, p_i  \in P$ and $s, s_i \in S$.
\item[3.] Let $z_1, \ldots, z_n$ be constant indeterminates.
\begin{itemize}
\item[-] Use {\sc ConstantMatrix} (Algorithm \ref{ALG:cs})
to compute a matrix $M \in C^{k \times n}$ and $\vv \in C^k$
such that $s - \sum_{i \in [n]} z_i s_i$ has merely constant residues if and only if the linear system given by the augmented matrix $(M,\vv)$ is consistent.
\item[-] Compute $N \in C^{l \times n}$ and $\vw \in C^l$ such that $p = \sum_{i \in [n]} z_i p_i$
if and only if the linear system given by the augmented matrix $(N,\vw)$ is consistent.
\item[-] Solve the linear system
$\begin{pmatrix}
      M \\
      N
      \end{pmatrix}\begin{pmatrix}
        z_1, \ldots, z_n
        \end{pmatrix}^\tau
       = \begin{pmatrix}
        \vv \\
        \vw
       \end{pmatrix}.$
\end{itemize}
\item[4.] If the above system has no solution, then $f$ has no elementary integral over $K_n$ by Theorem \ref{TH:elem}.
Otherwise, let $\tilde{c}_1, \ldots, \tilde{c}_n$ be such a solution. Set
$\tilde{r} = r - \sum_{i \in [n]} \tilde{c}_i r_i$ and $
   \tilde{s} = s - \sum_{i \in [n]} \tilde{c}_i s_i.$ Then
$\int f = g + \int \tilde{r} + \int \tilde{s} + \sum_{i \in [n]} \tilde{c}_i (t_i -\lambda_i).$
Note that $\int \tilde{r}$ is elementary because $\tilde{r} \in  C(x)$,
and that $\int \tilde{s}$ is elementary over $K_n$ by Theorem \ref{TH:elem}. 
\end{enumerate}
\begin{example} \label{EX:elem}
We follow the above outline to integrate 
$$f= \frac{x+(x-1)t_2}{(x-1)t_1}+\frac{t_2+t_3(1-t_1)}{x},$$
\begin{enumerate}
\item[1.] By $\phi_3$, we find an R-pair $(t_2t_3, \phi_3(f))$, 
where 
$\phi_3(f)=\frac{x}{(x-1)t_1}.$
\item[2.] Compute $\phi_{i-1}(t_i^\prime) = r_i + p_i + s_i,$ 
where 
\begin{center}
\begin{tabular}{|c|c|c|c|} \hline 
$i$ & $1$ & $2$ & $3$ \\ \hline
$(r_i, p_i, s_i)$ & $\left(\frac{1}{x-1}, 0, 0 \right)$ & $\left(\frac{1}{x}, -\frac{t_1}{x}, 0 \right)$ & $\left(\frac{1}{x}, 0 , \frac{1}{t_1} \right)$ 
\\ \hline
\end{tabular}
\end{center}
\item[3.] By step 3 in the above outline,  we have 
 $$
\begin{pmatrix}
 0 &1&  0 \\
     0 & 0 & -1  
      \end{pmatrix}
      \begin{pmatrix}
        z_1 \\ z_2 \\ z_3 
        \end{pmatrix}
       = \begin{pmatrix}  0 \\ -1  \end{pmatrix}.         
$$     
It has a solution $z_1=z_2=0$ and $z_3=1$.
\item[4.] Computing the residues yields  $\int f =t_2t_3+t_3+\log\left(\frac{t_1}{x}\right)$. 
 \end{enumerate}
\end{example}
Neither {\tt int()} command in {\sc Maple 2021} nor {\tt Integrate[]}  command in {\sc Mathematica 14.1} found an elementary integral for $f$. The {\sc Axiom}-based computer algebra system {\sc FriCAS 1.3.10} (see \cite{FriCAS})
returned a correct integral.
 Comprehensive tests are given in \cite{Abbasi2024}  for elementary integration in  current computer algebra systems.

\subsection{Telescopers}

General connections between symbolic integration and creative-telescoping
are described in \cite[Chapter 1]{RaabThesis}. Examples in \cite[\S 7]{CDL2018} illustrate that additive decompositions help us detect the existence  of telescopers for elements in some primitive towers.  We present two propositions for the same purpose by remainders and residues.

Let $K=C(x,y)$ be the field of rational functions in $x$ and $y$
equipped with the usual partial derivatives $D_x$ and $D_y$.
Differential fields related to integration for several derivations can be found in \cite{CavinessRothstein, Rosenlicht1976}. 
Let $t$ be an element in some partial differential field extension of $K$ such that $t$ is transcendental 
over $K$, $D_yD_x(t)=D_xD_y(t),$  $D_x(t) \in K[t]$ with degree less than two, and $D_y(t) \in K \setminus D_y(K)$.
Then $t$ is a primitive monomial over $K$ with respect to $D_y$.
The extended derivatives are still denoted by $D_x$ and $D_y$, respectively.

Every element of $K[t]$ is D-finite over $K$.
But $K(t)$ contains non-D-finite elements. For instance, 
$t^{-1}$ is not D-finite over $K$, because $t^{i+1}$ is the monic denominator of $D_y^i(t^{-1})$ for all $i \in \bN$. 

For $f \in K(t)$, 
a differential operator $L \in C(x)[D_x]^\times$ is called a {\em telescoper} for $f$
if $L(f) \in  D_y(K(t))$.

\begin{prop}\label{PROP:telescoper1}
Let $\phi: K(t) \rightarrow K(t)$ be the complete reduction for $D_y(K(t))$ given in Convention \ref{CON:tower} with $K=K_0$ and $\phi=\phi_1$.
For $f \in K(t)$ and $m \in \bN$, $f$ has a telescoper of order no more than $m$ if and only if
there exist $l_0$ \ldots, $l_m \in C(x)$, not all zero,
such that 
\begin{equation} \label{EQ:linear}
\sum_{i \in [m]_0} l_i \phi(D_x^i(f))  = 0.
\end{equation} 
\end{prop}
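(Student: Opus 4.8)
The plan is to translate the existence of a telescoper into a kernel condition for $\phi$ and then to pull the coefficients of the operator through $\phi$ by exploiting its linearity over the correct coefficient field. First I would unwind the definitions: a telescoper for $f$ of order at most $m$ is a nonzero operator $L = \sum_{i \in [m]_0} l_i D_x^i$ with $l_i \in C(x)$ such that $L(f) = \sum_{i \in [m]_0} l_i D_x^i(f)$ lies in $D_y(K(t))$. Since $\phi$ is the complete reduction for $D_y(K(t))$, we have $D_y(K(t)) = \ker(\phi)$, so the membership $L(f) \in D_y(K(t))$ is equivalent to $\phi(L(f)) = 0$. Note that each $D_x^i(f)$ again lies in $K(t)$, as $K(t)$ is closed under the extended derivation $D_x$, so $\phi(D_x^i(f))$ is well defined.

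The crucial observation, and the step that makes the whole argument go through, is that $\phi$ is linear not merely over $C$ but over the entire field $C(x)$. Indeed, the derivation underlying this reduction is $D_y$, and its subfield of constants in $K = C(x,y)$ is $C(x)$. Since $t$ is a primitive monomial over $K$ with respect to $D_y$, the extension $K(t)$ acquires no new constants by \cite[Theorem 5.1.1]{BronsteinBook}, so the field of $D_y$-constants of $K(t)$ is again $C(x)$. A complete reduction is by definition linear over its constant subfield; hence $\phi$ is $C(x)$-linear. Because every $l_i$ belongs to $C(x)$, I can then move the coefficients outside $\phi$:
$$\phi(L(f)) = \phi\!\left(\sum_{i \in [m]_0} l_i D_x^i(f)\right) = \sum_{i \in [m]_0} l_i \phi(D_x^i(f)).$$

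Combining the two steps, $f$ admits a telescoper of order at most $m$ precisely when there exist $l_0, \ldots, l_m \in C(x)$, not all zero, with $\sum_{i \in [m]_0} l_i \phi(D_x^i(f)) = 0$, which is exactly \eqref{EQ:linear}; both implications are then immediate, as the $l_i$ witnessing one side are the same that witness the other. I expect the only genuine subtlety to be the identification of the coefficient field: one must keep in mind that here \lq\lq constants\rq\rq\ means $D_y$-constants, so that $C(x)$ is constant, rather than the absolute constant field $C$ of Convention \ref{CON:tower}. It is exactly this enlarged $C(x)$-linearity that licenses passing the rational functions $l_i$ through $\phi$; without it the equivalence would fail, since $\phi$ would not commute with multiplication by elements of $C(x)$. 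The remaining manipulations are purely formal and require no computation.
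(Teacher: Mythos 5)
Your proposal is correct and follows essentially the same route as the paper's proof: identify $D_y(K(t)) = \ker(\phi)$ and use the $C(x)$-linearity of $\phi$ to pull the coefficients $l_i$ through, giving $\phi(L(f)) = \sum_{i \in [m]_0} l_i \phi(D_x^i(f))$, from which both implications are immediate. The only difference is that you explicitly justify the $C(x)$-linearity via the no-new-constants property of primitive monomial extensions (the $D_y$-constants of $K(t)$ being $C(x)$), a point the paper asserts without elaboration; this is a welcome clarification rather than a different argument.
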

\begin{proof} Let $L = \sum_{i \in [m]_0} l_i D_x^i$
with $l_0, \ldots, l_m \in C(x)$, not all zero.  Then 
$
\phi(L(f)) = \sum_{i \in [m]_0} l_i \phi(D_x^i(f)), 
$
because $\phi$ is $C(x)$-linear.
Assume that \eqref{EQ:linear} holds. Then $L$ is a telescoper for $f$ with order no more than $m$.
Conversely, assume that $L$ is a telescoper for $f$ with order no more than $m$. Then $\phi(L(f))=0$
because $\phi$ is a complete reduction.
Hence, \eqref{EQ:linear} holds. 
\end{proof}
Below is a sufficient condition on the existence of telescopers.   
\begin{prop} \label{PROP:telescoper2}
Let $f \in K(t)$. Then there exists a unique element $s \in S_t$ such that $\phi(f) \equiv s$ $\text{mod}$ $K[t]$.
If all residues of $s$ with respect to $D_y$ are in $\overline{C(x)}$,
then $f$ has a telescoper. 
\end{prop}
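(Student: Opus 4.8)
The plan is to reduce the existence of a telescoper for $f$ to the existence of a telescoper for the simple part $s$, and then to invoke the constant-residue machinery of Section~\ref{SUBSECT:log} together with the criterion of Proposition~\ref{PROP:telescoper1}. First I would establish the existence and uniqueness of $s$: applying the complete reduction $\phi$ and then Algorithm~{\sc HermiteReduce} (or directly Lemma~\ref{LM:reduce}~(i)) to $\phi(f)$, one decomposes $\phi(f)$ as the sum of a derivative, a polynomial part in $K[t]$, and a unique simple element $s\in S_t$; the uniqueness of $s$ is exactly the uniqueness asserted in \eqref{EQ:hr} via \cite[Lemma 2.1]{CDL2018}. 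This gives $\phi(f)\equiv s \bmod K[t]$, and in fact modulo $D_y(K(t))+K[t]$.

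The key idea is that telescoping commutes with both $\phi$ and with reduction modulo $K[t]$. Since $\phi$ is $C(x)$-linear (and $D_x$ maps $S_t$ into $S_t$, as differentiation preserves properness and normality of the denominator in a primitive monomial extension), I would argue that $f$ has a telescoper if and only if $s$ does: by Proposition~\ref{PROP:telescoper1}, a telescoper of order $\le m$ amounts to a nontrivial $C(x)$-linear relation among $\phi(D_x^i(f))$, and modulo $K[t]$ these remainders agree with $D_x^i(s)$ up to simple parts, so the relation can be tracked on the $S_t$-component. The plan is therefore to produce a telescoper for $s$ directly.

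To telescope $s$, I would exploit the constant-residue hypothesis. The crucial observation is that the residues of $s$, and of all its $D_x$-derivatives, live in a finite-dimensional $C(x)$-vector space determined by the roots of the denominators involved. Concretely, differentiating $s$ with respect to $D_x$ does not increase the set of poles in $t$ (the denominators are built from the same irreducible normal factors), so the residues of $s, D_x(s), D_x^2(s),\ldots$ at the relevant places all lie in $\overline{C(x)}$ by hypothesis and span a finite-dimensional $C(x)$-space. Hence the remainders $\phi(D_x^i(s))$ are confined to a finite-dimensional $C(x)$-subspace of $S_t$, and a nontrivial $C(x)$-linear relation \eqref{EQ:linear} must occur once $m$ exceeds this dimension. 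Invoking Proposition~\ref{PROP:telescoper1} then yields a telescoper.

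The main obstacle, which I would expect to require the most care, is controlling how the denominators and residues of $s$ behave under repeated application of $D_x$: I must verify that no new poles in $t$ are introduced and that the residues stay within a fixed finite-dimensional $C(x)$-space, so that the confinement argument yields a finite bound on $m$. This is precisely where the assumption $\deg_t(D_x(t))<2$ and the constant-residue condition are used, and the bookkeeping connecting residues in the $\overline{C(x)}$-extension back to a $C(x)$-dimension count is the delicate part of the argument.
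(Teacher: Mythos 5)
Your reduction of the problem to the simple part $s$ has a genuine hole: you never telescope the polynomial part. By Proposition \ref{PROP:rem1}, $\phi(f)=q+s$ with $q\in K[t]$ and $s\in S_t$, and $q$ need not vanish. The direction of your claimed equivalence that the proposition actually requires is \emph{a telescoper for $s$ yields a telescoper for $f$}, and this does not follow by ``tracking the relation on the $S_t$-component'': a nontrivial relation $\sum_i l_i\,\phi(D_x^i(s))=0$ says nothing about $\sum_i l_i\,\phi(D_x^i(q))$, whose terms lie in $K[t]$ and need not cancel under the same coefficients $l_i$. The paper closes exactly this gap by observing that $q$ is D-finite over $K$, hence has a telescoper by \cite[Lemma 4.1]{Zeil1990} or \cite[Lemma 3]{Lips1988}, and then combining that telescoper with one for $s$ via \cite[Remark 2.3]{CHLW2016}. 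Without such an argument your ``if and only if'' is unproved; the $S_t$-projection gives at best the opposite implication (telescoper for $f$ implies one for $s$), which is not the one you need.

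Second, the step you defer as ``the delicate part'' is the mathematical core of the proposition, and the hypothesis does not cover it: you are given only that the residues of $s$ lie in $\overline{C(x)}$, not that the residues of $D_x(s),D_x^2(s),\ldots$ do, so asserting the latter ``by hypothesis'' is circular. What makes it true is the identity used in the paper,
\[
D_x\left(\gamma\,\frac{D_y(u)}{u}\right)=D_y\left(\gamma\,\frac{D_x(u)}{u}\right)+D_x(\gamma)\,\frac{D_y(u)}{u},
\]
applied to the representation $s=\sum_{j}\beta_j\frac{D_y(t-\alpha_j)}{t-\alpha_j}$ from \cite[Lemma 3.1 (i)]{DGGL2023}; together with $D_y(\beta_j)=0$ it shows the residues of $D_x^i(s)$ are exactly $D_x^i(\beta_j)$, which stay in the finite extension $C(x)(\beta_1,\ldots,\beta_k)$. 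Granting that, your finite-dimension count does work and is a legitimate alternative to the paper's route: the paper instead takes a single operator $L\in C(x)[D_x]$ annihilating all the $\beta_j$ (they are algebraic over $C(x)$, hence D-finite) and verifies $L(s)=D_y(g)$ directly, with a symmetry argument to place the certificate $g$ in $K(t)$. Two smaller corrections: $D_x$ does \emph{not} map $S_t$ into $S_t$ (it raises pole orders, and the square of a normal polynomial is not normal), though this is harmless if you work with the remainders $\phi(D_x^i(s))$, which lie in $S_t$ because $D_x^i(s)$ is proper; and the existence of $s$ with $\phi(f)\equiv s \bmod K[t]$, with no leftover derivative term, is precisely Proposition \ref{PROP:rem1} rather than a consequence of Hermite reduction alone, which only gives a congruence modulo $K(t)'+K[t]$.
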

\begin{proof}  There exists a unique pair $(q, s)$ in $K[t] \times S_t$
such that $\phi(f)=q+s$ by Proposition \ref{PROP:rem1}. 
Since $q$ is D-finite over $K$, it has a telescoper by \cite[Lemma 4.1]{Zeil1990} or \cite[Lemma 3]{Lips1988}. 

It remains to prove that $s$ has a telescoper by \cite[Remark 2.3]{CHLW2016}. 
Let $s= \frac{a}{b}$, where $a, b \in K[t]$, $b$ is monic with respect to $t$ and $\gcd(a,b)=1$. 
Assume that $\alpha_1, \ldots, \alpha_k$ are the distinct roots of $b$. 
By \cite[Lemma 3.1 (i)]{DGGL2023}, we have that 
\begin{equation} \label{EQ:residue}
s = \sum_{j \in [k]}  \beta_j \frac{D_y(t-\alpha_j)}{t-\alpha_j},
\end{equation}
where $\beta_j \in \overline{K}$ is the residue of $f$ at $\alpha_j$ with respect to $D_y$. 
Since each $\beta_j$ is assumed to be in $\overline{C(x)}$,
there exists $L \in C(x)[D_x]$ annihilating all of them by \cite[Theorem 3.29 (3)]{KauersBook}. 
By the commutativity of applying derivations and taking logarithmic derivatives, we have
$$
D_x \left(\gamma \frac{D_y(u)}{u} \right) =  D_y\left(\gamma \frac{ D_x(u)}{u} \right) + D_x(\gamma) \frac{D_y(u)}{u}.
$$
for all $\gamma \in \overline{C(x)}$ and $u \in K(t)$. 
A repeated application of the above equality to \eqref{EQ:residue}  yields $g \in \overline{C(x)}(y, t)$ such that
$$ L(s) = D_y(g) +  \sum_{j \in [k]}  L\left( \beta_j \right)   \frac{D_y(t-\alpha_j)}{t-\alpha_j} = D_y(g) $$
Moreover, $g$ is symmetric in $\alpha_1$, \ldots $\alpha_k$ over $K(t)$ so that $g$ actually belongs to $K(t)$. 
\end{proof}

\begin{example} \label{EX:telesoper}
Let $K= \bC(x,y)$ and $t=\log(x+y)$. We try to construct respective telescopers for 
\[ f=\frac{2x}{(x+y)(t^2-x)} \quad \text{and} \quad \tilde{f} = 
y \frac{D_y(t-y)}{t-y}. \]

Note that $f$ is simple. So $\phi(f)=f$.  Its nonzero residues are $\pm \sqrt{x}$ by \cite[Theorem 4.4.3]{BronsteinBook}. 
By Proposition \ref{PROP:telescoper2},  $f$ has a telescoper.
Using the notation in Proposition \ref{PROP:telescoper1}, we have 
$2 x \phi(D_x(f)) = f.$
Thus, the minimal telescoper for $f$ 
is $2xD_x - 1$. 

Again, $\tilde f$ is simple. So $\phi(\tilde{f})= \tilde{f}$. 
Since $\tilde{f}$ has a nonzero residue  $y$, Proposition \ref{PROP:telescoper2} is not applicable.
Let 
$g = \frac{D_y(t-y)}{t-y}$ and $\gamma = \frac{D_x(t-y)}{D_y(t-y)}.$
Then $\tilde{f} = y g$ and $\gamma = (1-x-y)^{-1}$. For $\omega \in C(x,y)$, we calculate 
\begin{align*}
D_x \left( \omega g \right) & = D_x(\omega) g + \omega D_x \left( g \right) 
 = D_x(\omega) g + \omega D_y \left(\frac{D_x(t-y)}{t-y}\right) \\
& \equiv D_x(\omega) g  - D_y(\omega) \frac{D_x(t-y)}{t-y}  \mod D_y(K(t)) \\
&  \equiv \left( D_x(\omega) - \gamma D_y(\omega) \right) g \mod D_y(K(t)).
\end{align*}
Then $\phi(D_x(\omega g)) = \left( D_x(\omega) - \gamma D_y(\omega) \right) g$ because $g$ is simple.
Set $\gamma_0=y$ and $\gamma_i = D_x(\gamma_{i-1}) - \gamma D_y(\gamma_{i-1})$ for $i \ge 1$. 
It follows from  the above calculation that 
$\phi( D_x^i (\tilde{f}) )= \gamma_i g.$ 
Moreover, the denominator of $\gamma_i$ has degree $2i-1$ in $y$ for $i \ge 1$ by a straightforward induction. Therefore, 
$\phi(\tilde{f})$, $\phi(D_x(\tilde{f}))$, $\phi(D_x^2(\tilde{f}))$, \ldots
are linearly independent over $C(x)$. Consequently,  $\tilde{f}$ has no telescoper by Proposition \ref{PROP:telescoper1}.  
\end{example}
\section{Conclusions} \label{SECT:conc}
In this article, we have developed a complete reduction for derivatives in a primitive tower.
The reduction algorithm decomposes an element of such a tower as the sum of a derivative and a remainder,
where the derivative is unique up to an additive constant and the remainder is unique.
The algorithm can be applied to compute elementary integrals over primitive towers and to construct telescopers for some non-D-finite functions.
The work is a step forward in  the development of complete reductions for derivatives in transcendental Liouvillian extensions.

\begin{acks}
We thank Shaoshi Chen, Manuel Kauers, Peter Paule, Clemens Raab and Carsten Schneider for valuable discussions and suggestions.

 Special thanks go to Nasser M.\ Abbasi and Ralf Hemmecke for helping us experiment with {\sc FriCAS}
and sharing their experience in symbolic integration.
Part of the calculation in Example \ref{EX:alg} was carried out by 
the prototype implementation of lazy Hermite reduction and creative telescoping for algebraic functions
by Shaoshi Chen, Lixin Du and Manuel Kauers.

We are grateful to the anonymous referees for their friendly and careful reviews. Their comments 
guided us to revise the abstract and introduction substantially, and to decrease the complexity for constructing the $C$-basis described in Lemma \ref{LM:basis}.  
\end{acks}


\bibliographystyle{ACM-Reference-Format}
\bibliography{primitive}

\appendix
\section{Empirical results}

We present some empirical results about in-field integration obtained by 
our complete reduction ({\tt CR}), Algorithm {\sc AddDecompInField} in~\cite[page 150]{DGLW2020} ({\tt AD}),
and the {\sc Maple} function  {\tt int}. 
Experiments were carried out with {\sc Maple 2021} on a computer with imac CPU 3.6GHZ, Intel Core i9, 16G memory.
{\sc Maple} scripts of {\tt CR} and {\tt AD} are available at 
\url{http://mmrc.iss.ac.cn/~zmli/ISSAC2025.html}. 

Every integrand in experimental data was a derivative in the primitive tower $\bQ(x)(t_1, t_2, t_3)$, 
where  $t_1=\log(x), t_2 = \log(x+1)$ and $t_3 = \log(t_1)$. So {\tt CR}, {\tt AD} and {\tt int} 
are all applicable and have the same output, which is an integral of the input in the same tower. 
Three integrands in the form $p_i^\prime$ were generated for each $i$,
where  $p_i$ was a dense polynomial in some selected generators. 
Below is a summary of the average timings (in seconds).

In the first suite of data, we set $p_i \in \bQ(x, t_1, t_2)[t_3]$ such that $\deg_{t_3}(p_i)=i$ and all coefficients of $p_i$ are rational functions
whose numerators and denominators are both sparse random polynomials in $\bQ[x, t_1, t_2]$ with total degree 5.

\noindent 
\begin{center}
\begin{tabular}{|c|c|c|c|c|c|c|c|} \hline
    $i$ &   $1$ & $2$ & $3$ &  $4$   & $5$ & $6$ \\ \hline
    {\tt  CR}    & 1.42 & 8.32& 37.01& 122.55&1085.04 & >3600 \\ \hline
    {\tt AD}      &  0.96 & 10.42 &47.36 &149.02 &>3600 & >3600 \\ \hline
    {\tt int}      & 1.15 &4.52&23.30&53.43&166.27&346.29\\ \hline
\end{tabular}
\end{center}

In the second suite,  $p_i \in \bQ(x, t_1, t_2)[t_3]$ with degree $i$ in $t_3$. 
Its coefficients are quotients of linear polynomials in $\bQ[x, t_1, t_2]$.

\noindent 
\begin{center}
\begin{tabular}{|c|c|c|c|c|c|c|c|} \hline
    $i$ &   $6$ & $8$ & $10$ &  $12$   & $14$ & $16$\\ \hline
    {\tt CR}    & 0.90& 2.09 & 7.05 & 12.56 & 30.35& 62.11\\ \hline
    {\tt AD}      &  1.23& 4.29& 12.31& 31.08& 57.67& 170.70\\ \hline
    {\tt int}     & 3.83& 17.46& 31.61& 66.22& 144.70& 322.19 \\ \hline
\end{tabular}
\end{center}

\noindent 
In the third suite,  $p_i \in \bQ(x)[t_1, t_2, t_3]$  whose total degree is equal to $i$ and whose 
coefficients are quotients of random polynomials in $\bQ[x]$ with degree 5.

\noindent 
\begin{center}
\begin{tabular}{|c|c|c|c|c|c|c|c|} \hline
    $i$ &   $1$ & $2$ & $3$ &  $4$   & $5$ & $6$\\ \hline
    {\tt CR}   & 0.35& 0.19& 0.59& 4.02& 21.32& 88.51 \\ \hline
    {\tt AD}      &  0.39& 0.51& 3.48 & 30.53& 614.90& 1453.61 \\ \hline
    {\tt int}      & 0.53& 0.63 & 4.68 & 51.82 & 154.31 & 1255.49\\ \hline
\end{tabular}
\end{center}

\noindent 
In the last suite, $p_i \in \bQ[x, t_1, t_2, t_3]$  with total degree $i$. 
The {\sc Maple} function {\tt int} returned expressions involving unevaluated integrals
for some inputs. Whenever this happened, 
the corresponding entry is marked by $\int$.
\begin{center}
\begin{tabular}{|c|c|c|c|c|c|c|c|} \hline
    $i$ &   $5$ & $10$ & $15$ &  $20$   & $25$ & $30$\\ \hline
    {\tt CR}   & 0.39& 0.25& 0.81& 1.98& 4.32& 8.71\\ \hline
    {\tt AD}      &  0.45& 1.06 & 6.69& 32.83 & 141.09& 280.47\\ \hline
    {\tt int}      & 0.49 & $\int$  & $\int$ &7.09&$\int$ & $\int $ \\ \hline
\end{tabular}
\end{center}

The timings reveal that {\tt CR} outperformed {\tt AD}, and was more efficient than {\tt int} except for the integrands in the first suite.
There are also examples for which {\tt int} took more than one hour without any output, but both {\tt CR} and {\tt AD} returned correct results. 

We also observe that {\sc HermiteReduce} and {\sc AuxiliaryReduction} were much more time-consuming than {\sc Projection}
in the complete reduction (see Algorithm \ref{ALG:cr}).

\end{document}